\def\BState{\State\hskip-\ALG@thistlm}
\newcommand{\R}{ \mathbb{R}}
\newcommand{\Rm}{\mathbb{R}^M}
\newcommand{\Rd}{\mathbb{R}}
\newcommand{\Lop}{{\rm L}}
\newcommand{\bl}{{\mathcal{X}_{2}}}
\newcommand{\bltv}{{\mathcal{X}_{ 1}}}
\newcommand{\nl}{\mathcal N_{\rm L}}
\def\V#1{{\boldsymbol{#1}}}         
\def\Spc#1{{\mathcal{#1}}}  
\def\M#1{{\bf{#1}}}  
\def\Op#1{{\mathrm{#1}}}  
\newtheorem{theorem}{Theorem}
\newtheorem{proposition}[theorem]{Proposition}
\newtheorem{lemma}[theorem]{Lemma}
\newtheorem{definition}[theorem]{Definition}
\newtheorem{assumption}{Assumption}
\newcommand{\ud}{\,\mathrm{d}}
\begin{document}
\IEEEoverridecommandlockouts


\title{Continuous-Domain Solutions of Linear Inverse Problems with Tikhonov vs. Generalized TV Regularization }

\author{Harshit~Gupta,~
        Julien~Fageot,~
        and~Michael~Unser~
\thanks{The   authors   are   with   the   Biomedical   Imaging   Group,   \'{E}cole   polytechnique   f\'{e}d\'{e}rale   de
Lausanne,   Lausanne   1015,   Switzerland. This project has been funded by H2020-ERC, Grant agreement No. 692726-GlobalBioIm.}}



\maketitle

\begin{abstract}
We consider linear inverse problems that are formulated in the continuous domain. The object of recovery is a function that is assumed to minimize a convex objective functional. The solutions are constrained by imposing a continuous-domain regularization. We derive the parametric form of the solution (representer theorems) for Tikhonov (quadratic) and generalized total-variation (gTV) regularizations. We show that, in both cases, the solutions are splines that are intimately related to the regularization operator. In the Tikhonov case, the solution is smooth and constrained to live in a fixed subspace that depends on the measurement operator. By contrast, the gTV regularization results in a sparse solution composed of only a few dictionary elements that are upper-bounded by the number of measurements and  independent of the measurement operator. Our findings for the gTV regularization resonates with the minimization of the $\ell_1$ norm, which is its discrete counterpart and also produces sparse solutions. Finally, we find the experimental solutions for some measurement models in one dimension. We discuss the special case when the gTV regularization results in multiple solutions and devise an algorithm to find an extreme point of the solution set which is guaranteed to be sparse.
\end{abstract}
\begin{IEEEkeywords}
Linear inverse problem, representer theorem, regularization, spline, total variation, $L_2$, quadratic regularization.
\end{IEEEkeywords}
\section{Introduction}
In a linear inverse problem, the task is to recover an unknown signal from a finite set of noisy linear measurements. To solve it, one needs a forward model that describes how these measurements are acquired. Generally, this model is stated as the continuous-domain transform of a continuous-domain signal. For example, MRI data is modeled as the samples of the Fourier transform of a continuous-domain signal. The traditional approach to state this inverse problem is to choose an arbitrary but suitable basis $\{\varphi_n\}$ and to write that the reconstructed signal is
\begin{equation}\label{basis}
f(x)=\sum_{n=1}^N \text{f}_n \varphi_n(x),
\end{equation}
where $\M f=(\text{f}_1,\ldots,\text{f}_N)\in \R^N $. Given the measurements $\V z \in \R^M$, the task then is to find the expansion coefficients $\M f$ by minimizing 
\begin{equation}\label{discrete2}
\M f^*=\arg \min_{\M f \in \R^N}\left( \underbrace{\|\V z- \M H \M f\|_2^2}_{\text{I}}+\lambda\underbrace{ \|\M L \M f \|^2_{2}}_{\text{II}}\right),
\end{equation} 
where $\M H: \R^M\times \R^N$ has elements $[\M H]_{m,n}=\langle h_m, \varphi_n \rangle$. The analysis functions $\{h_m\}_{m=1}^M$ specify the forward model which encodes the physics of the measurement process. Term $\rm{I}$ in \eqref{discrete2} is the data fidelity. It ensures that the recovered signal is close to the measurements. Term $\rm{II}$ is the regularization, which encodes the prior knowledge about the signal. The regularization is imposed on some transformed version of the signal coefficients using the matrix $\M L$. Various linear \cite{Tikhonov1963,Bertero1998} and iterative algorithms \cite{Figueiredo2003},\cite{Lustig2007},\cite{Figueiredo2007} have been developed to solve Problem \eqref{discrete2}. In recent years, the notion that the real-world signals are sparse in some basis (e.g., wavelets) has become popular.
This prior is imposed by using the sparsity-promoting $\ell_1$-regularization norm \cite{Donoho2006},\cite{Candes2007} and results in the minimization problem 
 \begin{equation}\label{discrete1}
\M f^*=\arg \min_{\M f \in \R^N} \left( \|\V z- \M H \M f\|_2^2+ \lambda\|\M L \M f \|_{1} \right).
\end{equation} 
The solutions to \eqref{discrete2}, \eqref{discrete1}, and their variants with generalized data-fidelity terms are well known~\cite{arthur1970},\cite{Tibshirani1996},\cite{efron2007},\cite{Unser2016b}.\\
While those discretization paradigms are well studied and used successfully in practice, it remains that the use of a prescribed basis $\{\varphi_n\}$, as in \eqref{basis}, is  somewhat arbitrary.

In this paper, we propose to bypass this limitation by reformulating and solving the linear inverse problem directly in the continuous domain. To that end, we impose the regularization in the continuous domain, too, and restate the reconstruction task as a functional minimization. We show that this new formulation leads to the identification of a natural basis for the solution; this results in an exact discretization of the problem.

\noindent Our contributions are summarized as follows:
\begin{itemize}
\item 
Given $\V z \in \R^{M} $, we formalize the inverse problem in the continuous domain as
\begin{equation}\label{into:cd}
f_R=\arg \min_{f \in \Spc X}\underbrace{ \left(\|\V z- \Op H\{f\}\|_2^2+\lambda R(f)\right)}_{J_{R}(\V z|f)} ,
\end{equation}
where $f$ is a function that belongs to a suitable function space $\Spc X$.
Similarly to the discrete regularization terms $\|\M L \M f \|^2_{\ell_2}$ and $\|\M L \M f \|_{\ell_1}$ in \eqref{discrete2} and \eqref{discrete1}, we focus on their continuous-domain counterparts $R(f)=\|\Lop f\|^{2}_{L_2}$ and $R(f)=\|\Lop f\|_{\Spc M}$, respectively. There, $\Lop$ and $\Op H$ are the continuous-domain versions of $\M L $ and $\M H$, while $\|\Lop f\|_{\Spc M}$ is the proper continuous-domain counterpart of the discrete $\ell_1$ norm. We show that the effect of these regularizations is similar to the effect of their discrete counterparts. 

\item We provide the parametric form of the solution (representer theorem) that minimizes $J_R(\V z|f)$ in \eqref{into:cd} for the Tikhonov regularization $R(f)=\|\Lop f\|^2_{L_2}$ and the generalized total-variation (gTV) regularization $R(f)=\|\Lop f\|_{\Spc M}$. Our results underline how the discrete regularization resonates with the continuous-domain one. The optimal solution for the Tikhonov case is smooth, while it is sparse for the gTV case. The optimal bases in the two cases are intimately connected to the operators $\Lop$ and $\Op H$.
\item We present theoretical results that are valid for any convex and lower-semicontinuous data-fidelity term. This includes the case when the data-fidelity term is $\|\V z- \Op H\{f\}\|_2^2$.
\item We propose an exact discretization scheme to minimize $J_R(\V z|f)$ in the continuous domain. Even though the minimization of $J_{R}(\V z|f)$ is an infinite-dimensional problem, the knowledge of the optimal basis of the solution makes the problem finite-dimensional: it boils down to the search for a set of optimal expansion coefficients.

\item We devise an algorithm to find a sparse solution when the gTV solution is non-unique. For this case, the optimization problem turns out to be a LASSO \cite{Tibshirani1996} minimization with non-unique solution. We introduce a combination of FISTA \cite{beck2009} and the simplex algorithm to find a sparse solution which we prove to be an extreme point of the solution set.
 \end{itemize}
 
\noindent  The paper is organized as follows: In Sections 2 and 3, we present the formulation and the theoretical results of the inverse problem for the two regularization cases. In Section~4, we compare the solutions of the two cases. We present our numerical algorithm in Section 5 and illustrate its behavior with various examples in Section 6. The mathematical proofs of the main theorems are given in the appendices and the supplementary material. 
\subsection{Related Work}
The use of $R(f)=\|\Lop f\|^2_{L_2}$ goes back to Tikhonov's theory of regularization \cite{Tikhonov1963} and to kernel methods in machine learning \cite{Scholkopf:2001:LKS:559923}. In the learning community, representer theorems (RT) as in \cite{scholkopf2001generalized},\cite{Wahba1990} use the theory of reproducing-kernel Hilbert spaces (RKHS) to state the solution of the problem for the restricted case where the measurements are samples of the function. For the generalized-measurement case, there are also tight connections between these techniques and variational splines and radial-basis functions \cite{wahba1999support},\cite{bezhaev2001variational}, \cite{wendland2004scattered}. These representer theorems, however, either have restrictions on the empirical risk functional or on the class of measurement operators.
 
Specific spline-based methods with quadratic regularization have been developed for inverse problems. In particular, \cite{kybic2002generalized}, \cite{Kybic2002} used variational calculus. Here, we strengthen these results by proving the unicity and existence of the solution of \eqref{into:cd} for $R(f)=\|\Lop f\|^2_{L_2}$. We revisit the derivation of the result using the theory of RKHS.

Among more recent non-quadratic techniques, the most popular ones rely on (TV) regularization which was introduced as a noise-removal technique in \cite{Rudin1992} and is widely used in computational imaging and compressed sensing, although always in discrete settings. Splines as solutions of TV problems for restricted scenarios have been discussed in \cite{Steidl2006}. More recently, a RT for the continuous-domain $R(f) = \lVert \Op L f\rVert_{\mathcal{M}}$ in a general setting has been established in \cite{Unser2016}, extending the seminal work of Fisher and Jerome  \cite{Fisher1975}.  The solution has been shown to be composed of splines that are directly linked to the differential operator $\Lop$. Other recent contributions on inverse problems in the space of measures include \cite{Bredies2013inverse,Candes2013super,Denoyelle2015support,Chambolle2016geometric, Flinth2017exact}. In particular, in this paper, we extend the result of \cite{Unser2016} to an unconstrained version of the problem. 

\section{Formulation}
In our formulation of a linear inverse problem, the signal $f$ is a function of the continuous-domain variable $ x \in \Rd$. The task is then to recover $f $ from the vector of measurements $\V z=\Op H\{f\}+\M n \in \Rm$, where $\M n$ is an unknown noise component that is typically assumed to be i.i.d. Gaussian. \\
In the customary discrete formulation, the basis of the recovered function is already chosen and, therefore, all that remains is to recover the expansion coefficients of the signal representation \eqref{basis}. In this scenario, one often includes matrices $\M H$ and $\M L$ that directly operate on these coefficients. However, for our continuous-domain formulation, the operations have to act directly on the function $f$. For this reason, we also need the continuous-domain counterparts of the measurement and regularization operators.
The entities that enter our formulation are described next.

\subsection{Measurement Operator} 
The system matrix $\M H$ in \eqref{discrete2}  and \eqref{discrete1} is henceforth replaced by the operator $\Op H: \Spc X \to \R^M $ that maps the continuous-domain functions living in the space $\Spc X$ to the linear measurements $\V z \in \R^N$. This operator is described as 
\begin{equation}
\Op H\{f\}= \left( \langle h_1,f \rangle, \ldots, \langle h_M, f \rangle\right)=\left( z_1,\ldots,z_M\right)=\V z,
\end{equation}
where $\langle h,g \rangle=\int_{\R} h(x) g(x) \ud x$.
For example, the components of the measurement operator that samples a function at the locations $x_1,\ldots,x_M$ are modeled by $h_m=\delta(\cdot-x_m)$. Similarly, Fourier measurements at pulsations $\omega_1,\ldots,\omega_M$ are obtained by taking $h_m={\rm{e}}^{-{\rm{j}} \omega_m (\cdot)}$.
\subsection{Data-Fidelity Term}
\noindent  As extension of the conventional quadratic data-fidelity term $\|\V z- \M H \M f\|_2^2$, we consider the general convex cost functional 
$E : \R^M \times \R^M \to \R^+ \cup \{\infty\}$ that measures the discrepancy between the measurements $\M z$ and the values $\Op H\{f\}$ predicted from the reconstruction. A relevant example is the Kullback-Leibler (KL)-divergence, which is often used as the data-fidelity term when the measurements are corrupted by Poisson noise \cite{Csiszar1991}. Alternatively, when the measurements are noiseless, we use the indicator function 
\begin{align}\Spc I(\V z_0, \Op H \{f\})=\begin{cases}  \phantom{0}0,\quad \V z_0=\Op H \{f\}\\ \infty, \quad \V z_0\neq\Op H \{f\},   \end{cases} \end{align}
which imposes an exact fit.
We assume that $E$ is a convex lower semi-continuous function with respect to its arguments. This will enable us to state the existence of a solution and use convex optimization techniques to find the minimum of the objective functional.
\subsection{ Regularization Operator}
\noindent Since the underlying signal is continuously defined, we need to replace the regularization matrix $\M L$ in \eqref{discrete2} and \eqref{discrete1} by a regularization operator $\Op L:\Spc X \to \Spc Y$, where $\Spc X$ and $\Spc Y$ are appropriate function spaces to be defined in Section \ref{spaces}. The typical example that we have in mind is the derivative operator $\Op L=\Op D=\frac{\rm{d \phantom{x}} }{{\rm{d}} x}$. 
The continuous-domain regularization is then imposed on $\Lop f$. We assume that the operator $\Lop$ is admissible in the sense of defintion \ref{def:splineadmiss}.
\begin{definition}\label{def:splineadmiss}
The operator $\Lop: \Spc X \to \Spc Y$ is called \emph{\textbf{spline-admissible}} if
\begin{itemize}
\item it is linear and shift-invariant;
\item its null space $ \Spc N_{\Lop}=\{p \in \Spc X : \Lop p=0\}$ is finite-dimensional; 
\item it admits the Green's function $\rho_{\Lop}: \R \to \R$ with the property that $\Lop\rho_{\Lop}=\delta$.
\end{itemize}
\end{definition}
Given that $\widehat{\Lop}$ is the frequency response of $\Lop$, the Green's function can be calculated through the inverse Fourier transform $\rho_{\Lop}=\mathcal{F}^{-1} \left\lbrace \frac{1}{\displaystyle{\widehat{\Lop} } }\right\rbrace $. For example, if $\Lop=\Op D$, then $\rho_{\Op D}(x)=\frac{1}{2} \text{sign}(x)$.

\subsection{ Regularization Norms}
\noindent Since the optimization is done in the continuous domain, we also have to specify the proper counterparts of the $\ell_2$ and $\ell_1$ norms, as well as the corresponding vector spaces. 
\renewcommand{\theenumi}{\roman{enumi}}
\begin{enumerate}
\item Quadratic (or Tikhonov) regularization: $R_{\rm{Tik}}(f)=\|\Lop f\|^2_{L_2}$, where
\begin{equation}
\|w\|_{L_2}^2:= \int_{\R}|w( x)|^2\ud x.
\end{equation}
\item Generalized total variation: $R_{\rm{gTV}}(f)=\|\Lop f\|_{\Spc M}$, where
\begin{align}
\|w\|_{\Spc M}:=& \sup_{\varphi \in \Spc C_0(\R), \|\M \varphi\|_{\infty}=1}\langle{ w, \mathbb{\varphi}}\rangle. \label{M intro}
\end{align}
There, $\Spc C_0(\R)$ is the space of continuous functions that decay to 0 at infinity.
Moreover, $\Spc M=\{w:\R \rightarrow \R ~|~ \|w\|_{\Spc M} < \infty\}$.
In particular, when $w \in  L_1\subset \Spc M$, we have that
\begin{align}
\|w\|_{\Spc M}=& \int_{\R}|w( x)|\ud x =\|w\|_{L_1}.
\end{align}
Yet, we note that $\Spc M$ is slightly larger than $L_1$  since it also includes the Dirac distribution  $\delta$ with $\|\delta\|_{\Spc M}=1$.
The popular TV norm is recovered by taking $ \|f\|_{\text{TV}}=\|\Op D f\|_{\Spc M}$ \cite{Unser2016}.
\end{enumerate}
\subsection{Search Space}\label{spaces}
\noindent The Euclidean search space $\R^N $ is replaced by spaces of functions, namely,
 \begin{align}
 \bl=&\{f : \R \rightarrow \R \ | \  \|\Lop f\|_{L_2}<+\infty\},\\
\bltv=&\{f : \R \rightarrow \R \ | \  \|\Lop f\|_{\Spc M}<+\infty\}.
\end{align}
In other words, our search (or native) space is the largest space over which the regularization is well defined. It turns out that $\bl$ and $\bltv$ are Hilbert and Banach spaces, respectively. This means that there exists a well defined inner product  $\langle\cdot,\cdot\rangle_{\bl}$ on  $\bl$ and a norm $\|\cdot\|_{\bltv}$ on $\bltv$.
The structure of these spaces has been studied in \cite{Unser2016} and is recalled in the supplementary material.

\noindent As we shall in Section \ref{sec:Theoretical Results}, the solution of \eqref{into:cd} will be composed of splines; therefore, we also review the definition of the splines.
\begin{definition}[\textbf{Nonuniform $\Lop$-spline}]
A function $f: \R \to \R$ is called a nonuniform $\Lop$-spline with spline knots ($ x_1,\ldots, x_K$) and weights ($a_1,\ldots,a_K $) if 
\begin{align}\label{def:non-uniformSpline}
\Lop f=\sum_{k=1}^K a_k \delta( \cdot - x_k ).
\end{align}
\end{definition}
\noindent By solving the differential equation in \eqref{def:non-uniformSpline}, we find that the generic form of the nonuniform spline $f$ is
\begin{align}
f=p_0+\sum_{k=1}^K a_k \rho_{\Lop}(\cdot-x_k),
\end{align}
where $p_0 \in \nl$. Note that $ \rho_{\Lop}(\cdot-x_k)=\Lop^{-1}\{\delta(\cdot-x_k)\}$, where $\Lop^{-1} : f \mapsto \Lop^{-1}f=\rho_{\Lop}*f$, is the shift-invariant inverse of $\Lop$.

\section{Theoretical Results}\label{sec:Theoretical Results}
To state our theorems, we need some technical assumptions. 
\begin{assumption} 
\renewcommand{\theenumi}{\roman{enumi}}
\begin{enumerate} 
\item The bounded vector-valued functional $\Op H:\Spc X \rightarrow \Rm  $ gives the linear measurements $f\mapsto\Op H \{f\}= (\langle h_1,f\rangle,\ldots,\langle h_M,f\rangle)$. 
\item The functional $E : (\R^M \times \R^M) \to \R^+  \cup \{\infty\}$ is convex and lower semi-continuous.
\item The regularization operator  $\Lop: \Spc X \to \Spc Y$ is \emph{spline-admissible}. Its finite-dimensional null space $\nl$ has the basis $\V p=(p_1,\ldots ,p_{N_0})$.
\item The inverse problem is well posed over the null space. This means that, for any pair $p_1, p_2 \in \nl$, we have that
\begin{align}
\Op H \{p_1\}=\Op H \{p_2\}\Leftrightarrow p_1=p_2.
\end{align}
In other words, different null-space functions result in different measurements.
\end{enumerate}
\end{assumption}
In particular Condition iv) implies that $\nl \cap \Spc N_{\Op H}= \{0\}$, where $\Spc N_{\Op H}$ is the null space of the vector-valued measurement functional. This property is essential to make the optimization problem \eqref{into:cd}  well posed. This kind of requirement is common to every regularization scheme.

We now state our two main results. Their proofs are given in Appendix \ref{apppx:L2} and Appendix \ref{app:proofL1theo}.
\subsection{Inverse Problem with Tikhonov/$L_2$ Regularization }
\begin{theorem}\label{theo:L2_representer}
Let Assumption 1 hold for the search space $\Spc X= {\bl}$ and regularization space $\Spc Y=  L_2$. Then, the minimizer
\begin{align}\label{empirical} 
 f_2=\arg\min_{f \in \bl }\left( E(\V z, \Op H(f))+ \lambda \| \Lop f\|^2_{L_2} \right)
\end{align}
 is unique and admits a parametric solution of the form
  \begin{equation}\label{eqn:spline_L2}
  f_2(x)=\sum_{m=1}^{M} a_m \varphi_m(x)+\sum_{n=1}^{N_0} b_n p_n(x),
  \end{equation}
  where $\varphi_m=\mathcal{F}^{-1}\left\lbrace\frac{\widehat{h}_m}{|\widehat{\Lop}|^2}\right\rbrace=(\Lop^*\Lop )^{-1}h_m$, $\M a=(a_1,\ldots,a_M)$, and $\M b=(b_1,\ldots,b_{N_0})$ are expansion coefficients such that 
\begin{align}\label{eqn:ortho}
\sum_{m=1}^M a_m \langle h_m,p_n\rangle=0
\end{align} for all $n \in \{1,\ldots, N_0\}$.
\end{theorem}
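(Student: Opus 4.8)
The plan is to treat this as a reproducing-kernel Hilbert space (RKHS) problem, exploiting that $\bl$ is a Hilbert space whose geometry is dictated by $\Lop$ (its structure being recalled from the supplementary material). First I would establish existence and uniqueness by the direct method. Since $E$ is convex and lower semi-continuous and $\Op H$ is linear and bounded (condition i), the data term $f\mapsto E(\V z,\Op H\{f\})$ is convex and lower semi-continuous on $\bl$, while the regularizer $f\mapsto\|\Lop f\|_{L_2}^2$ is convex and controls every direction except the finite-dimensional $\nl$. Coercivity of the full objective then follows because $p\mapsto\Op H\{p\}$ is injective on $\nl$ by the well-posedness condition iv), so large null-space components are penalized through the data term; a minimizing sequence admits a weakly convergent subsequence whose limit is a minimizer. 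For uniqueness I would take two minimizers $f_1,f_2$: strict convexity of $\|\cdot\|_{L_2}^2$ forces $\Lop f_1=\Lop f_2$, whence $f_1-f_2\in\nl$, while convexity of $E$ gives $\Op H\{f_1\}=\Op H\{f_2\}$, and condition iv) then yields $f_1=f_2$.

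For the parametric form I would split $\bl=\nl\oplus\Hilbert_0$, where on $\Hilbert_0$ the seminorm $\|\Lop\cdot\|_{L_2}$ is a genuine inner-product norm $\langle u,v\rangle_{\Hilbert_0}=\langle\Lop u,\Lop v\rangle_{L_2}$, and write any candidate as $f=p+u$ with $p\in\nl$, $u\in\Hilbert_0$; since $\Lop p=0$, the regularizer depends only on $u$. Because $\Op H$ is bounded, each measurement functional $u\mapsto\langle h_m,u\rangle$ is continuous on $\Hilbert_0$ and, by Riesz, is represented by some $\varphi_m$ with $\langle h_m,u\rangle=\langle\Lop\varphi_m,\Lop u\rangle_{L_2}$, which in the Fourier domain is exactly $\varphi_m=(\Lop^*\Lop)^{-1}h_m$. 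Decomposing $u=u_\parallel+u_\perp$ with $u_\parallel\in\mathrm{span}\{\varphi_1,\dots,\varphi_M\}$ and $u_\perp$ orthogonal to it in $\Hilbert_0$, I would observe that $u_\perp$ leaves every measurement unchanged (as $\langle\varphi_m,u_\perp\rangle_{\Hilbert_0}=0$) while strictly increasing $\|\Lop u\|_{L_2}^2$ unless $u_\perp=0$. Optimality therefore forces $u\in\mathrm{span}\{\varphi_m\}$, yielding the claimed form $f_2=\sum_{m=1}^M a_m\varphi_m+\sum_{n=1}^{N_0}b_n p_n$.

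It remains to derive the orthogonality relations $\sum_{m=1}^M a_m\langle h_m,p_n\rangle=0$. The key point is that the individual representers $\varphi_m=(\Lop^*\Lop)^{-1}h_m$ need not belong to $\bl$: in the Fourier domain $\widehat{\Lop\varphi_m}=\widehat{h}_m/\overline{\widehat{\Lop}}$ generically has a singularity at the zeros of $\widehat{\Lop}$, i.e. at the frequencies attached to $\nl$, so that $\Lop\varphi_m\notin L_2$ in isolation. I would then show that the combination $\sum_m a_m\varphi_m$ lies in the native space $\bl$ — equivalently $\Lop\sum_m a_m\varphi_m\in L_2$ — precisely when these singular contributions cancel, and identify this cancellation with the $N_0$ moment conditions $\sum_m a_m\langle h_m,p_n\rangle=0$. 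Since $\dim\nl=N_0$, the number of constraints matches the null-space dimension, closing the argument.

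The main obstacle is this last reconciliation of the clean RKHS picture with the seminorm/native-space subtleties. One cannot use a single inner product that simultaneously represents the $L_2$ measurements and annihilates $\nl$, and the formal representers $\varphi_m$ escape $\bl$ individually. Making the splitting $\bl=\nl\oplus\Hilbert_0$ and the Riesz step rigorous — either through a biorthogonal system adapted to the basis $\V p$ of $\nl$ or through a careful Fourier-domain analysis at the zeros of $\widehat{\Lop}$ — and tying the admissibility of $\sum_m a_m\varphi_m$ exactly to the stated orthogonality relations is where the genuine work lies; the remainder is standard convex-analytic and Hilbert-space bookkeeping.
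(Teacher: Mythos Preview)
Your approach is essentially that of the paper: existence and uniqueness via coercivity plus strict convexity of the objective (using injectivity of $\Op H$ on $\nl$ for the null-space directions), followed by the decomposition $\bl=\nl\oplus\Hilbert_0$ and Riesz representation on the Hilbert complement $(\Hilbert_0,\|\Lop\cdot\|_{L_2})$. The paper packages the Riesz step as a separate ``abstract representer theorem'' (projection onto the span of the $h_m^*$), but the content is identical to your orthogonal-decomposition argument $u=u_\parallel+u_\perp$.

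The only substantive difference is how the orthogonality relations \eqref{eqn:ortho} are obtained. The paper defines the complement as $\Hilbert=\{f\in\bl:\langle f,p\rangle=0\ \forall p\in\nl\}$ and argues that, since the Riesz image $\sum_m a_m h_m^*$ lies in $\Hilbert$ by construction, the conditions $\sum_m a_m\langle h_m,p_n\rangle=0$ follow. You instead propose to read them off as the Fourier-side cancellation that makes $\sum_m a_m\varphi_m$ land in $\bl$. These are two faces of the same structural fact, and you are right to flag this step as the one needing care: the individual $\varphi_m=(\Lop^*\Lop)^{-1}h_m$ indeed generally escape $\bl$, so the identification ``$h_m^*=\varphi_m$'' is only meaningful at the level of the constrained combination. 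The paper's treatment here is terse; your formulation via moment conditions at the zeros of $\widehat{\Lop}$ is a legitimate way to make it precise, and is the standard mechanism in the conditionally-positive-definite kernel literature.
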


\subsection{Inverse Problem with gTV Regularization}


\begin{theorem}	\label{theo:L1_representer}
Let Assumption 1 hold for the search space $\Spc X= { \bltv}$ and regularization space $\Spc Y= \Spc M$. Moreover, assume that $\Op H$ is weak*-continuous (see Supplementary Material). Then, the set 
\begin{align}\label{empiricalL1} 
\Spc V= \left\{\arg \min_{f \in \Spc X_{1} }\left( E(\V z, \Op Hf)+ \lambda \| \Lop f\|_{\Spc M} \right)\right\}
\end{align}
of minimizer
is nonempty, convex, weak*-compact, and its extreme points are nonuniform $\Lop$-splines of the  form
\begin{align} \label{eq:L1formsolutions}
f_1(x)=\sum_{k=1}^K a_k \rho_\Lop(x-x_k)+\sum_{n=1}^{N_0} b_n p_n(x)
\end{align}
for some $K\leq (M-N_0)$. The parameters of the solution are the unknown knots  $(x_1,\ldots,x_K)$ and the expansion coefficients $\M a=(a_1,\ldots,a_K), \M b=(b_1,\ldots,b_{N_0})$. The solution set $\Spc V$ is the convex hull of these extreme points and $\|\Lop f\|_{\Spc M}=\|\M a\|_{1}$. 
\end{theorem}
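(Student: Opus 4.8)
The plan is to exploit the Banach-space geometry of $\bltv$, to reduce the penalized problem to the \emph{constrained} gTV problem solved in \cite{Unser2016}, and then to read off the structure of $\Spc V$ from convex analysis. Throughout I would use the decomposition, recalled from \cite{Unser2016}, of the native space $\bltv$ into the direct sum of the finite-dimensional null space $\nl$ and a complement on which $\Lop$ restricts to an isometric isomorphism onto $\Spc M$. Writing $f = p + g$ with $p \in \nl$ and $w = \Lop f = \Lop g \in \Spc M$, the objective $J(f) = E(\V z, \Op H f) + \lambda \|\Lop f\|_{\Spc M}$ becomes a convex function of $(p,w) \in \nl \times \Spc M$ in which the data term depends on $(p,w)$ only through the linear map $\Op H$ and the penalty is $\lambda \|w\|_{\Spc M}$.

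First, existence, convexity, and weak*-compactness. Convexity of $J$ is immediate from the convexity of $E$, the linearity of $\Op H$ and $\Lop$, and the convexity of $\|\cdot\|_{\Spc M}$; hence the minimizer set $\Spc V$ is convex. For existence I would run the direct method: along a minimizing sequence $(p_k, w_k)$ the penalty keeps $\|w_k\|_{\Spc M}$ bounded, so by the Banach--Alaoglu theorem (with $\Spc M = \Spc C_0(\R)'$) a subsequence of $w_k$ converges weak* to some $w^\star$, while Assumption 1 iv) forces the null-space components $p_k$ into a bounded subset of the finite-dimensional $\nl$ and thus to converge, up to a further subsequence, to some $p^\star$. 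Weak*-lower-semicontinuity of $\|\cdot\|_{\Spc M}$, weak*-continuity of $\Op H$, and lower-semicontinuity of $E$ then give $J(p^\star,w^\star) \le \liminf_k J(p_k,w_k)$, so $\Spc V \neq \emptyset$. The same lower-semicontinuity makes $\Spc V$ a weak*-closed sublevel set of $J$ at its minimal value; since $\lambda\|\Lop f\|_{\Spc M}\le J_{\min}$ on $\Spc V$ and the null-space part is controlled by Assumption 1 iv), $\Spc V$ is norm-bounded, hence weak*-compact.

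Next, the extreme points. By the Krein--Milman theorem, the weak*-compact convex set $\Spc V$ has extreme points and equals the weak*-closed convex hull of $\mathrm{ext}\,\Spc V$. To characterize an extreme point $f^\star$, set $\V y^\star = \Op H f^\star$. Because $E$ sees $f^\star$ only through $\Op H f^\star$, any $g$ with $\Op H g = \V y^\star$ and $\|\Lop g\|_{\Spc M} \le \|\Lop f^\star\|_{\Spc M}$ satisfies $J(g) \le J(f^\star)$; minimality therefore shows that $f^\star$ solves the constrained problem $\min\{\|\Lop f\|_{\Spc M} : \Op H f = \V y^\star\}$ and that the whole constrained solution set $\Spc V_{\V y^\star}$ is contained in $\Spc V$. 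An extreme point of $\Spc V$ lying in the subset $\Spc V_{\V y^\star}$ is a fortiori extreme in $\Spc V_{\V y^\star}$, so the constrained representer theorem of \cite{Unser2016} applies and yields the spline form \eqref{eq:L1formsolutions} with $K \le M - N_0$ knots.

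Finally, the convex-hull statement follows from Krein--Milman together with the fact that the extreme points are pinned down by the finite-dimensional geometry of the $M$ measurement constraints, so the closure in the convex-hull representation can be dispensed with; and the norm identity is a direct computation: for the form \eqref{eq:L1formsolutions} one has $\Lop f_1 = \sum_{k=1}^K a_k \delta(\cdot - x_k)$ since $\Lop p_n = 0$ and $\Lop \rho_\Lop(\cdot - x_k) = \delta(\cdot - x_k)$, whence $\|\Lop f_1\|_{\Spc M} = \sum_{k=1}^K |a_k| = \|\M a\|_1$ for distinct knots. I expect the main obstacle to be twofold: making the existence argument rigorous in the correct weak*-topology on $\bltv$, where the boundary conditions hidden in the direct-sum decomposition must be handled with care, and establishing the sparsity bound $K \le M - N_0$, which ultimately rests on a dimension-counting argument for the extreme points of $\Spc V_{\V y^\star}$ — the $M$ linear measurement constraints net of the $N_0$ degrees of freedom absorbed by $\nl$. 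Correctly transferring extremality from $\Spc V$ to $\Spc V_{\V y^\star}$ and ensuring the bound is net of the null-space dimension is the most delicate conceptual point.
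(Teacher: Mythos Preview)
Your proposal is correct and follows the same two-step architecture as the paper: establish that $\Spc V$ is nonempty, convex, and weak*-compact via the direct method, then reduce to the constrained gTV problem of \cite{Unser2016} to identify the extreme points. Your existence argument (Banach--Alaoglu on $\Lop f_k$, finite-dimensional compactness on the null-space part, weak*-l.s.c.\ of the objective) is precisely the content of the abstract result the paper invokes (their Theorem~\ref{theo:banachopti}), so there is no substantive difference there.

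The one genuine divergence is in the reduction step. The paper first appeals to Theorem~\ref{solution set} to conclude that \emph{all} minimizers share a common measurement $\V z_0$, so that $\Spc V$ coincides with a single constrained solution set and \cite[Theorem~2]{Unser2016} applies to $\Spc V$ globally. The proof of Theorem~\ref{solution set}, however, uses strict convexity of $E(\V z,\cdot)$, which is stronger than Assumption~1\,ii). Your route sidesteps this: you fix an extreme point $f^\star$, set $\V y^\star=\Op H f^\star$, note that $\Spc V_{\V y^\star}\subset\Spc V$, and use the elementary fact that an extreme point of $\Spc V$ lying in a convex subset is extreme there. This per-extreme-point reduction is slightly more general---it covers data terms $E$ that are merely convex---at the price of not obtaining the stronger statement that $\Spc V$ itself equals a single constrained solution set. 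Both routes land on \cite{Unser2016} for the spline form and the bound $K\le M-N_0$, so the difference is in robustness rather than in outcome.
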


The existence and nature of the solution set in these 2 cases is stated jointly in Theorem \ref{solution set}. The proof is given in Appendix \ref{appx:solutionset}.
\begin{theorem}\label{solution set}
Let Assumption 1 hold where $\Spc X$ is the search space and $\Spc Y$ is the regularization space. Then, every member of the solution set
\begin{align}\label{empirical} 
 \Spc V=\left\{\arg\min_{f \in \Spc X }\left( E(\V z, \Op H \{f\})+ \lambda R(f)  \right)\right\},
\end{align}
where $R$ is either $R_{\rm{Tik}}$ or $R_{\rm{gTV}}$, has the same measurement $\V z_{0}$ given that the problem has at least one solution.
\end{theorem}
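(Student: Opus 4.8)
The plan is to prove the claim by pure convexity, without appealing to the explicit parametric forms of Theorems~\ref{theo:L2_representer} and~\ref{theo:L1_representer}. First I would record that the objective $J_R(\V z\mid f)=E(\V z,\Op H\{f\})+\lambda R(f)$ is convex in $f$: the map $f\mapsto\Op H\{f\}$ is linear, $E(\V z,\cdot)$ is convex by Assumption~1(ii), and both $\Rtik(f)=\|\Lop f\|_{L_2}^2$ and $\Rgtv(f)=\|\Lop f\|_{\Spc M}$ are convex, being a (squared) norm precomposed with the linear operator $\Lop$. Consequently the minimal value, call it $J^\ast$, is well defined whenever a solution exists, and the solution set $\Spc V$ is a convex subset of the (convex) search space $\Spc X$.

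Next I would take two arbitrary minimizers $f_0,f_1\in\Spc V$, write $\V z_0=\Op H\{f_0\}$ and $\V z_1=\Op H\{f_1\}$, and test the midpoint $f_{1/2}=\tfrac12(f_0+f_1)\in\Spc X$, which satisfies $\Op H\{f_{1/2}\}=\tfrac12(\V z_0+\V z_1)$ by linearity. Applying convexity of $E(\V z,\cdot)$ and of $R$ termwise yields
\begin{align}
J^\ast\le J_R(\V z\mid f_{1/2})
&=E\!\left(\V z,\tfrac12(\V z_0+\V z_1)\right)+\lambda R(f_{1/2})\notag\\
&\le \tfrac12\big(E(\V z,\V z_0)+E(\V z,\V z_1)\big)+\tfrac{\lambda}{2}\big(R(f_0)+R(f_1)\big)\notag\\
&=\tfrac12\big(J^\ast+J^\ast\big)=J^\ast.
\end{align}
Since the two ends coincide, every inequality is an equality; in particular $E(\V z,\cdot)$ attains its convexity bound with equality along the chord joining $\V z_0$ and $\V z_1$, and likewise $R$ is affine along the segment $[f_0,f_1]$. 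This already shows that both the data-fidelity and the regularization values are constant across $\Spc V$ and that $\Op H(\Spc V)$ is itself a convex set on which $E(\V z,\cdot)$ is affine.

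The final and crucial step is to upgrade this equality into $\V z_0=\V z_1$, and this is where I expect the \textbf{main obstacle} to lie: for a merely convex $E$ the equality only forces $E(\V z,\cdot)$ to be \emph{affine} on $[\V z_0,\V z_1]$, which does not by itself collapse the segment. I would close the gap through strict convexity of the data term in its predicted-measurement argument: for the quadratic fidelity $E(\V z,\V y)=\|\V z-\V y\|_2^2$, which is the case of primary interest, strict convexity in $\V y$ makes the chord equality impossible unless $\V z_0=\V z_1$, giving the common measurement $\V z_0$. I would therefore present the argument so that the conclusion rests on this strict-convexity property (shared by the quadratic and KL fidelities), flagging that for a fidelity that is flat along some direction $\Op H(\Spc V)$ can genuinely be a nondegenerate set. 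Crucially, the entire argument never distinguishes between $\Rtik$ and $\Rgtv$, using only their convexity, so it establishes the statement for both regularizers at once.
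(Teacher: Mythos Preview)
Your proposal is correct and follows essentially the same route as the paper's own proof: take two minimizers, form a convex combination, use convexity of $R$ and strict convexity of $E(\V z,\cdot)$ in its second argument to force the measurements to agree (the paper phrases it as a contradiction from ``strong convexity,'' you phrase it as equality in the chord inequality, but the substance is identical). You are in fact slightly more careful than the paper in flagging that Assumption~1(ii) as written only gives convexity, and that the conclusion genuinely requires strict convexity of $E$ in the predicted-measurement variable---a hypothesis the paper's proof invokes without stating.
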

Theorem \ref{solution set} implies that, for the gTV case when $E$ is strongly convex, the elements of the solution set $\Spc V$ map to the unique point $\V z_{\Spc V}=\Op H\{ f\}, \, \forall \, f \in \Spc V$. 
\subsection{Illustration with Ideal Sampling}

\noindent Here, we discuss the regularized case where noisy data points $\left((x_1,z_1),\ldots,(x_M,z_M)\right)$ are fitted by a function. The measurement functionals in this case are the shifted Dirac impulses $h_{m}=\delta(\cdot-x_m)$ whose Fourier transform is $\widehat{h}_{m}(\omega)={\rm{e}}^{-{\rm{j}} \omega x_m}$. We choose $\Lop=\Op D$ and $E= \|\V z- \Op H\{f\}\|_2^2$. 
For the $L_2$ problem, we have that
\begin{align}\label{l2_example_problem}
f_{2}=\arg \min_{f \in \Spc X_2} \left(\sum_{m=1}^{M}|z_m-f(x_m)|^2+\lambda \|\Op D f\|^2_{L_2} \right).
\end{align}
As given in Theorem \ref{theo:L2_representer}, $f_2$ is unique and has the basis function $\varphi_{m}(x)=\mathcal{F}^{-1}\left\lbrace\frac{{{\rm{e}}^{-{\rm{j}} (\cdot) x_m}}}{|\rm{j} (\cdot)|^2}\right\rbrace(x)=\frac{1}{2}|x-x_m|$. The resulting solution is piecewise linear. It can be expressed as 
\begin{align}\label{l2_example_solution}
f_{2}(x)= b_1+\sum_{m=1}^{M}  \frac{1}{2} a_m |x-x_m| ,
\end{align} where $b_1 \in \Spc N_{\Op D}$ is a constant.\\
 We contrast \eqref{l2_example_problem} with the gTV version
 \begin{align}\label{l1_example_problem}
f_{1}=\arg \min_{f \in \bltv} \left(\sum_{m=1}^{M}|z_m-f(x_m)|^2+\lambda \underbrace{\|\Op D f\|_{\Spc M}}_{\| f\|_{\text {TV}}} \right).
\end{align}
In this scenario, the term $\|\Op Df\|_{\Spc M}$ is the total variation of the function $f$. It penalizes solutions that vary too much from one point to the next.

One readily checks that $\rho_{\Op D}=\mathds 1_{+}$ is a Green's function of $\Op D$ since it satisfies $\Op D\{\mathds 1_{+}\}=\delta$. Based on Theorem \ref{theo:L1_representer}, any extreme point of  (\ref{l1_example_problem}) is of the form
\begin{align}\label{l1_example_solution}
f_{1}(x)=b_1+ \sum_{k=1}^{K} a'_k {\mathds 1}_+(x-\tau_k),
\end{align} 
which is a piecewise constant function composed of a constant term $b_1$ and  $K \leq( M-1)$ unit steps (Heaviside functions) located at $\{\tau_{k}\}^K_{k=1}$. These knots are not fixed a priori and usually differ from the measurement points $ \{x_m\}^M_{m=1}$.

The two solutions and their basis functions are illustrated in Figure \ref{demo} for specific data.
This example demonstrates that the mere replacement of the $L_2$ penalty with the gTV norm has a fundamental effect on the solution: piecewise-linear functions having knots at the sampling locations are replaced by  piecewise-constant functions with a lesser number of adaptive knots. 
Moreover, in the gTV case, the regularization has been imposed on the derivative of the function $\left( \|\Op D f\|_{\Spc M}\right)$, which uncovers the innovations $\Op D  f_{1}=\sum_{k=1}^{K} a'_k \delta(\cdot-\tau_k)$. 
By contrast, when $R(f)=\|\Op D f\|^2_{L_2}=\langle \Op D^*\Op Df,f \rangle$, the recovered solution is such that $\Op D^*\Op D f_{2}=\sum_{m=1}^{M} a_m \delta(\cdot-x_m)$, where $\Op D^*=-\Op D$ is the adjoint operator of $\Op D$. 
Thus, in both cases, the recovered functions are composed of the Green's function of the corresponding active operators: $\Op D $ vs. $\Op D^*\Op D=-\Op D^2$.

\begin{figure}
\begin{minipage}[b]{0.96\linewidth}
  \centering
  \centerline{\includegraphics[width=0.95\linewidth,height=4cm, trim={0 0 0 7mm},clip=true ]{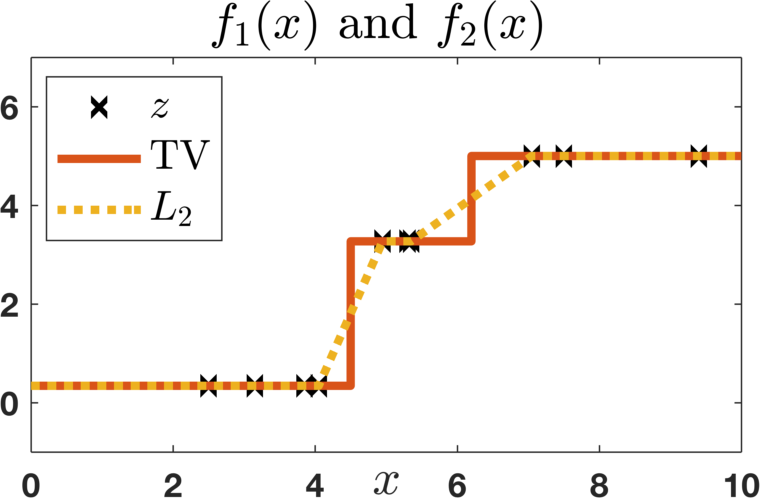}}
  \centerline{(a) $f_1(x)$ and $f_2(x)$.}
  \vspace{0.1cm}
   \includegraphics[width=\linewidth,height=3.5cm, trim={0 0 0 0},clip=true]{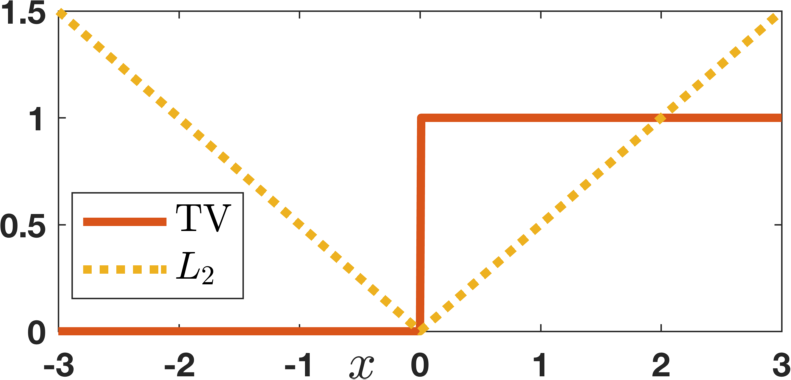}
  \centerline{(b) $\rho_{\Op D}(x)$ and $\rho_{\Op D^*\Op D}(x)$.}
\end{minipage}
\caption{Reconstructions of a signal from nonuniform samples for $\Lop =\Op D$: (a) Tikhonov ($L_2$) vs. gTV solution, and (b) Corresponding basis functions $\rho_{\Op D}$ vs. $\rho_{\Op D^*\Op D}$. Note that the gTV solution is non-unique since, for example, any nondecreasing piecewise-constant interpolation between the fourth and the fifth measurement has the same arc-length as the solution shown.
}\label{demo}
\end{figure}
\section{Comparison}
We now discuss and contrast the results of Theorems \ref{theo:L2_representer} and \ref{theo:L1_representer}.  In either case, the solution is composed of a primary component and a null-space component whose regularization cost vanishes. 

\subsection{Nature of the Primary Component}
\subsubsection{Shape and Dependence on Measurement Functionals} The solution for the gTV regularization is composed of atoms within the infinitely large dictionary $\{\rho_{\Lop}(\cdot-\tau)\},\, \forall \tau \in \R$, whose shapes depend only on $\Lop$. In contrast, the $L_2$ solution is composed of fixed atoms $\{\varphi_m\}_{m=1}^M$ whose shapes depend on both $\Lop $ and $\Op H$. As the shape of the atoms of the gTV solution does not depend on $\Op H$, this makes it easier to inject prior knowledge in that case.
\subsubsection{Adaptivity} The weights and the location of the atoms of the gTV solution are adaptive and found through a data-dependent procedure which results in a sparse solution that turns out to be a nonuniform spline. By contrast, the $L_2$ solution lives in a fixed finite-dimensional space.  

\subsection{Null-Space Component}  The second component in either solution belongs to the null space of the operator $\Lop$. As its contribution to regularization vanishes, the solutions tend to have large null-space components in both instances.

\subsection{Oscillations} The modulus of the Fourier transform of the basis function of the gTV case, $\left\lvert\left\{\frac{1}{\widehat{\Lop}}\right\}\right\rvert$  typically decays faster than that of the $L_2$ case, $\left\lvert \left\{\frac{\widehat{h}_m}{|\widehat{\Lop}|^2}\right\}\right\lvert$. Therefore, the gTV solution exhibits weaker Gibbs oscillations at edges.

\subsection{Unicity of the Solution}
Our hypotheses guarantee existence. Moreover, the minimizer of the $L_2$ problem is unique. By contrast, the gTV problem can have infinitely many solutions, despite all having the same measurements. The solution set in this case is convex and the extreme points are  nonuniform splines with fewer knots than the number ($M-N_0$) of measurements. When the gTV solution is unique, it is guaranteed to be an $\Lop $-spline.
\subsection{Nature of the Regularized Function} One of the main differences between the reconstructions $f_2$ and $f_1$ is their sparsity. Indeed, $\Lop f_{1}$ uncovers Dirac impulses situated at $(M-1)$ locations for the gTV case, with $\Lop f_{1}=\sum_{m=1}^{M-1} a_m \delta(\cdot -\tau_m)$. In return, $\Lop f_{2}$ is a nonuniform $\Lop$-spline convolved with the measurement functions, whose temporal support is not localized. This allows us to say that the gTV solution is sparser than the Tikhonov solution.

\section{Discretization and Algorithms}
\label{sec:algorithm}
 We now lay down the discretization procedure that translates the continuous-domain optimization into a more tractable finite-dimensional problem. 
Theorems \ref{theo:L2_representer} and \ref{theo:L1_representer} imply that the infinite-dimensional solution lives in a finite-dimensional space that is characterized by the basis functions $\{\varphi_m\}^M_{m=1}$ for $L_2$ and $\{\rho_{\Lop}(\cdot-\tau_k)\}_{k=1}^K$ for gTV, in addition to $\{p_n\}_{n=1}^{N_0}$ as basis of the null space. 
Therefore, the solutions can be uniquely expressed with respect to the finite-dimensional parameter $\M a \in \R^M$ or $\M a \in \R^K$, respectively, and $\M b \in \R^{N_0}$.
Thus, the objective functional $J_{R}(\V z|\lambda,f)$ can be discretized to get the objective functional $J_{R}(\V z|\lambda,\M a,\M b)$. Its minimization is done numerically, by expressing $\Op H\{f\}$ and $\|\Lop f\|^2_{L_2}$ or $\|\Lop f\|_{\Spc M}$ in terms of $\M a$ and $\M b$. We discuss the strategy to achieve $J_{R}(\V z|\lambda,\M a,\M b)$ and its minima for the two cases.
\subsection{Tikhonov Regularization}
For the $L_2$ regularization, given $\lambda >0$, the solution 
\begin{align}\label{l2general} 
 f_2=\arg\min_{f \in \bl }\underbrace{\left( E(\V z, \Op H\{f\})+ \lambda \| \Lop f\|^2_{L_2} \right)}_{J_{2}(\V z|\lambda,f)}
\end{align}
can be expressed as
 \begin{equation}\label{l2generalsol}
  f_2=\sum_{m=1}^{M} a_m \varphi_m+\sum_{n=1}^{N_0} b_n p_n.
  \end{equation}
  Recall that $\varphi_m=(\Lop^*\Lop)^{-1}h_m$, so that
  \begin{equation}\label{l*l2generalsol}
 \Lop^*\Lop f_2=\sum_{m=1}^{M} a_m h_m.
  \end{equation}
The corresponding $J_{2}(\V z|\lambda, \M a,\M b)$ is then found by expressing $\Op H \{f_2\}$ and $ \| \Lop f_2\|^2_{L_2}$ in terms of $\M a$ and $\M b$.
Due to the linearity of the model,
 \begin{align}
    \Op H\{f_2\}&= \sum_{m=1}^{M} a_m \Op H \{\varphi_m\}+\sum_{n=1}^{N_0} b_n \Op H \{p_n\}\nonumber \\
    &= \M V \M a+\M W \M b,
 \end{align}
where $ [\M V]_{m,n}= \langle {h_m},{\varphi_n}\rangle $ and $[\M W]_{m,n}=\langle{h_m},{p_n}\rangle$. Similarly,
 \begin{align}
   \langle \Op Lf_2,\Op Lf_2\rangle&=   \langle \Op L^* \Op L f_2,f_2\rangle=\left\langle \sum_{m=1}^{M} a_m h_m, f_2\right\rangle \label{firststep}\\
    &= \M a^T \M V \M a +\M a^T \M W \M b=\M a^T \M V \M a\label{secondstep},
  \end{align}
  where \eqref{firststep} uses \eqref{l*l2generalsol} and where \eqref{secondstep} uses the orthogonality property \eqref{eqn:ortho}, which we can restate as $\M a^T \M W=\M 0$.
By substituting these reduced forms in \eqref{l2general}, the discretized problem becomes
  \begin{equation}
 f_2=\arg \min_{\M a, \M b}  \underbrace{\left(E(\V z,  \M V \M a+\M W \M b)+\lambda \M a^T \M V \M a\right)}_{J_{2}(\V z|\lambda, \M a,\M b)=J_{2}(\V z|\lambda,f_2)}.
\end{equation} 
  Due to Assumption 1.ii),  this problem is convex. If $E$ is differentiable with respect to the parameters, the solution can be found by gradient descent.\\
When $E(\V z, \Op H \{f\})=\|\V z -\Op H \{f\}\|_2^2$, the problem is reduced to
 \begin{equation}
  \arg \min_{\M a, \M b} \underbrace{\left( \|\V z- (\M V \M a+\M W \M b)\|_2^2+\lambda \M a^T \M V \M a \right)}_{J_{2}(\V z|\lambda, \M a,\M b)}
\end{equation} 
 which is very similar to \eqref{discrete2}.
 This criterion is convex with respect to the coefficients $\M a$ and $ \M b$. 
 Enforcing that the gradient of $J_2$ vanishes with respect to $\M a$ and $\M b$ and setting the gradient to $\M 0$ then yields $M$ linear equations with respect to the $M+N_0$ variables, while the orthogonality property \eqref{eqn:ortho} gives $N_0$ additional constraints. 
  The combined equations correspond to the linear system 
\begin{align}\label{eqn:L2matrix}
\begin{bmatrix}
\M V+\lambda \M I& \M W \\
\M W^T & \M 0\\
\end{bmatrix}
\begin{bmatrix}
\M a\\
\M b
\end{bmatrix}&=\begin{bmatrix}
\M z\\
\M 0
\end{bmatrix}.
\end{align}
The system matrix so obtained can be proven to be positive definite due to the property of Gram matrices generated in an RKHS and the admissibility condition of the measurement functional (Assumption 1). 
This ensures that the matrix is always invertible. 
The consequence is that the reconstructed signal can be obtained by solving a linear system of equation, for instance by QR decomposition or by simple matrix inversion. The derived solution is the same as the least-square solution in \cite{Kybic2002}.
\subsection{gTV Regularization} 
{
\noindent In the case of gTV regularization, the problem to solve is 
\begin{align}\label{l1general} 
 f_1=\arg\min_{f \in \bl }\underbrace{\left( E(\V z, \Op H \{f\})+ \lambda \| \Lop f\|_{\Spc M} \right)}_{J_{1}(\V z|\lambda,f)}.
\end{align}
According to Theorem \ref{theo:L1_representer}, an extreme-point solution of \eqref{l1general} is
\begin{align}
f_{1}(x)= \sum_{k=1}^{K} a_{k} \rho_{\Lop}(x-\tau_k)+\sum_{n=1}^{N_0}b_n p_n(x)
\end{align}and satisfies
\begin{equation}
	\Op L f_1 = w_1 = \sum_{k=1}^K a_k \delta( \cdot - \tau_k)
\end{equation}
with $K \leq (M-N_0)$. Theorem \ref{theo:L1_representer} implies that we only have to recover $a_k$, $\tau_k$, and the null-space component $p$ to recover $f_1$.

Since we usually know neither $K$ nor
$\tau_k$ beforehand, our solution is to quantize the $x$-axis and look for $\tau_k $ in the range $[0,T]$ on a grid with $N \gg K $ points. We control the quantization error with the grid step $\Delta=T/N$.  
The discretized problem is then to find $\M a \in \R^N$ with fewer than $(M - N_0)$ nonzero coefficients and $\M b \in \R^{N_0}$ such that 
 \begin{equation}
 f_{1,\Delta }(x) = \sum_{n=0}^{N-1} a_{n} \rho_{\Op L} (x - n \Delta ) + \sum_{n=1}^{N_0} b_n p_n(x)
 \end{equation}
satisfies \eqref{l1general}, with $K \leq (M-N_0) \ll N$ nonzero coefficients $a_n$.
When the discretization step $\Delta$ goes to $0$ (or when $N$ is large enough), we recover the solution of the original problem \eqref{l1general}.}

Similarly to the $L_2 $ case, $J_{1}(\V z|\lambda, \M a,\M b)$ is found by expressing $ \Op H \{f_{1,\Delta}\}$ and $ \| \Lop f_{1,\Delta}\|_{\Spc M}$ in terms of $\M a$ and $\M b$. 
 For this, we use the properties that $\Lop\rho_{\Lop}=\delta$, $\|\delta\|_{\rm TV}=1$, and $\Lop p_n=0$ for $n \in [1\ldots N_0]$. 
 This results in
\begin{align}\label{eqn:exact1}
\Op H \{f_{1,\Delta}\}&=\M P \M a +\M Q \M b ,\\
 \| \Lop f_{1,\Delta}\|_{\Spc M }&= \|\M a\|_1,\label{eqn:exact2}
\end{align}
 where $\M a=(a_0,\ldots,a_{N-1}),~[\M P]_{m,n}=\langle h_m,\rho_{\Lop}(\cdot-n \Delta)\rangle $ for $ n\in [0\ldots N -1]$, $[\M Q]_{m,n}=\langle h_m,p_n\rangle$ for $n\in [1 \ldots N_0] $, $\|\M a\|_1=\sum_{n=1}^N|a_n|$, and where $N$ is the initial number of Green's functions of our dictionary.
 The new discretized objective functional is
\begin{equation}
f_1=\arg \min_{\M a,\M b}\underbrace{\left(E(\V z ,\left(\M P \M a +\M Q \M b \right))+ \lambda \|\M a\|_1 \right)}_{J_{1}(\V z|\lambda, \M a,\M b)=J_{1}(\V z|\lambda,f_1)}.
\end{equation}

When $E$ is differentiable with respect to the parameters, a minimum can be found by using proximal algorithms where the slope of $\|\M a\|_1$ is defined by a Prox operator. 
We discuss the two special cases when $E$ is either an indicator function or a quadratic data-fidelity term.

 \vspace{0 mm}
\subsubsection{Exact Fit with $E=\Spc I(\V z_0, \Op H \{f\})$}
To perfectly recover the measurements, we impose an infinite penalty when the recovered measurements differ from the given ones. In view of \eqref{eqn:exact1} and \eqref{eqn:exact2}, this corresponds to solving
\begin{align}\label{eqn:LP1}
(\M a^*, \M b^*)= \arg \min_{\M a, \M b} \|\M a\|_1 \quad \text{subject to} \quad \M P \M a +\M Q \M b=\V z.
\end{align}
We then recast Problem (\ref{eqn:LP1}) as the linear program
\begin{align}\nonumber
 (\M a^*, \M u^*, \M b^*)=\min_{\M a,\M u,\M b} ~\sum_{n=1}^N  u_n ~\text{subject~to}~ 
 \M u +\M a&\geq \M 0, \\ \nonumber
 \M u -\M a&\geq \M 0, \\
 \quad \M P \M a +\M Q \M b &=\V z, 
 \end{align}
where the inequality $\M x \geq \M y$ between any 2 vectors $\M x \in \R^N$ and $\M y \in \R^N$ means that $x_n \geq y_n$ for $n \in [1\ldots N]$. This linear program can be solved by a conventional simplex or a dual-simplex approach \cite{Dantzig1955}, \cite{Luenberger1973}. 

\subsubsection{ Least Squares Fit with $E=\|\V z -\Op H \{f\}\|^2_2$} When $E$ is a quadratic data-fidelity term, the problem becomes
\begin{align}\label{lassoNull}
 (\M a^*,\M b^*)=\arg \min_{\M a, \M b} \left(\|\V z -\left(\M P \M a +\M Q \M b \right)\|^2_2+ \lambda \|\M a\|_1\right),
\end{align}
which is more suitable when the measurements are noisy. The discrete version \eqref{lassoNull} is similar to \eqref{discrete1}, the fundamental difference being in the nature of the underlying basis function. 

The problem is converted into a LASSO formulation \cite{Tibshirani1996} by decoupling the computation of $\M a^*$ and $\M b^*$. Suppose that $\M a^*$ is fixed, then $\M b^*$ is found by differentiating (\ref{lassoNull}) and equating the gradient to~$\M 0$. This leads to
\begin{equation}
\M b^*=\left(\M {Q}^T\M Q\right)^{-1} \M {Q}^T (\V z-\M P \M a^*).
\end{equation}
Upon substitution in (\ref{lassoNull}), we get that
\begin{align}\label{lasso}
 \M a^*&=\arg \min_{\M a} \left(\|\M Q' \V z -\M Q' \M P \M a \|^2_2+ \lambda \|\M a\|_1\right),
\end{align}
where $\M Q'=\left(\M I-\M Q\left(\M {Q}^T\M Q\right)^{-1} \M {Q}^T \right)$ and $\M I$ is the $(M\times M)$ identity matrix.
Problem (\ref{lasso}) can be solved using a variety of optimization techniques such as interior-point methods or proximal-gradient methods, among others. We employ the popular iterative algorithm FISTA \cite{beck2009}, which has an $\Spc O(1/t^2)$ convergence rate  with respect to its iteration number $t$. However, in our case, the system matrices are formed by the measurements of the shifted Green's function on a fine grid. This leads to high correlations among the columns and introduces two issues.
\begin{itemize}
\item If LASSO has multiple solutions, then FISTA can converge to a solution within the solution set, whose sparsity index is greater than $M$.
\item If LASSO has a unique solution, then the convergence to the exact solution can be slow. The convergence rate is inversely proportional to the Lipschitz constant of the gradient of a quadratic loss function $\left(\max\text{Eig}\left(\M H^T\M H\right)\right)$, which is typically high for the system matrix obtained through our formulation.
\end{itemize} 

We address these issues by using a combination of FISTA and simplex, governed by the following Lemma \ref{lemma:KKT} and Theorem \ref{prop:Simplex}.
The properties of the solution of the LASSO problem have been discussed in \cite{Tibshirani2013}, \cite{Rauhut2008}, \cite{foucart2013a}. We quickly recall one of the main results from \cite{Tibshirani2013}.
\begin{lemma}[{\cite[Lemma 1 and 11]{Tibshirani2013}}]\label{lemma:KKT} Let $\V z \in \Rm$ and $\M H \in \R^{M \times N}$, where $M<N$. Then, the solution set
\begin{align}\label{eqn:Lassolemma}
\alpha_{\lambda}=\left\lbrace\arg \min_{\M a \in \R^{N}} \left(\|\V z -\M H \M a\|^2_2+ \lambda \|\M a\|_1 \right) \right\rbrace
\end{align}
has the same measurement $\M H \M a^*=\V z_0$ for any $\M a^* \in \alpha_{\lambda}$. Moreover, if the solution is not unique, then any two solutions $\M a^{(1)}, \M a^{(2)} \in \alpha_{\lambda}$ are such that their $m$th element satisfies $\left\{{\rm sign}\left(\M a_m^{(1)}\right){\rm sign}\left(\M a_m^{(2)}\right)\geq 0\right\}$ for $m \in [1\ldots M]$. In other words, any two solutions have the same sign over their common support.
\end{lemma}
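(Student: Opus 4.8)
The plan is to establish the two assertions separately, the first by a strict-convexity argument on the fitted values and the second as a consequence of it. Write $J(\M a)=\|\V z-\M H\M a\|_2^2+\lambda\|\M a\|_1$ for the LASSO objective and let $J^*$ denote its minimal value, attained on the (convex) solution set $\alpha_\lambda$.

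First I would prove that the fitted value $\M H\M a^*$ is the same for every $\M a^*\in\alpha_\lambda$. Take two minimizers $\M a^{(1)},\M a^{(2)}\in\alpha_\lambda$ and form the midpoint $\M a=\tfrac12(\M a^{(1)}+\M a^{(2)})$. Since $\alpha_\lambda$ is convex, $\M a$ is again a minimizer, so $J(\M a)=J^*$. The map $\bx\mapsto\|\V z-\bx\|_2^2$ is \emph{strictly} convex, whereas $\|\cdot\|_1$ is convex; hence, if we had $\M H\M a^{(1)}\neq\M H\M a^{(2)}$, strict convexity of the quadratic term in the fitted value would give $\|\V z-\M H\M a\|_2^2<\tfrac12\|\V z-\M H\M a^{(1)}\|_2^2+\tfrac12\|\V z-\M H\M a^{(2)}\|_2^2$, and combining this with convexity of $\lambda\|\cdot\|_1$ would yield $J(\M a)<J^*$, a contradiction. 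Therefore $\M H\M a^{(1)}=\M H\M a^{(2)}$, and we may call the common fitted value $\V z_0$.

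With the fitted value fixed, the data term is constant across $\alpha_\lambda$, so every solution shares the same (minimal) $\ell_1$ norm; call it $c=\|\M a^*\|_1$. I would then exploit the equality case of the triangle inequality at the midpoint: since $\M H\M a=\V z_0$ as well, $\M a$ is a minimizer with $\|\M a\|_1=c$, yet coordinatewise $\|\M a\|_1=\sum_m\bigl|\tfrac12(a_m^{(1)}+a_m^{(2)})\bigr|\le\tfrac12\sum_m\bigl(|a_m^{(1)}|+|a_m^{(2)}|\bigr)=c$. Equality forces $|a_m^{(1)}+a_m^{(2)}|=|a_m^{(1)}|+|a_m^{(2)}|$ for each $m$, which holds precisely when $a_m^{(1)}$ and $a_m^{(2)}$ do not have opposite signs, i.e. ${\rm sign}(a_m^{(1)}){\rm sign}(a_m^{(2)})\ge0$. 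On the common support this means identical sign, as claimed. Alternatively, the same conclusion follows from the subgradient optimality condition $\M H^T(\V z-\M H\M a^*)=\lambda\M s$ with $\M s\in\partial\|\M a^*\|_1$: since $\M H\M a^*\equiv\V z_0$, the vector $\M s^*=\tfrac1\lambda\M H^T(\V z-\V z_0)$ is common to all solutions, and any coordinate lying in the support of some solution must satisfy $s_m^*={\rm sign}(a_m^*)$, which pins the sign across solutions.

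The routine part is the algebra; the only step requiring care is the first one, where I must keep clean the distinction between strict convexity of the loss in the fitted value $\M H\M a$ and mere convexity in $\M a$ itself. The latter stems from $\M H$ having a nontrivial null space when $M<N$, which is exactly the mechanism that permits non-uniqueness in the first place. Everything else follows from convexity of $\alpha_\lambda$ together with the equality condition of the triangle inequality.
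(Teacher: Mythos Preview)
Your argument is correct. Note, however, that the paper does not supply its own proof of this lemma: it is quoted verbatim from \cite[Lemmas~1 and~11]{Tibshirani2013} and used as a black box. So there is no ``paper's proof'' to compare against in the strict sense.

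That said, your first step is exactly the mechanism the paper deploys elsewhere: the proof of Theorem~\ref{solution set} in Appendix~\ref{appx:solutionset} uses the identical strict-convexity-of-the-data-term argument to show that all minimizers share the same measurement. Your second step (equality case of the triangle inequality on the $\ell_1$ norm, once the data term is frozen) is a clean elementary route that avoids writing down KKT conditions explicitly; Tibshirani's original proof goes via the stationarity condition, essentially your ``alternative'' argument. One cosmetic point there: with the objective as stated (no $\tfrac12$ in front of the quadratic), the stationarity condition reads $2\M H^T(\V z-\V z_0)=\lambda\M s$ rather than $\M H^T(\V z-\V z_0)=\lambda\M s$; this does not affect the sign conclusion. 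Also, the index range in the statement should be $m\in[1\ldots N]$ (the coordinates of $\M a\in\R^N$), not $[1\ldots M]$; your proof correctly treats all $N$ coordinates.
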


We use Lemma \ref{lemma:KKT} to infer Theorem \ref{prop:Simplex}, whose proof is given in Appendix \ref{prop:Simplex}.
\begin{theorem}\label{prop:Simplex}
Let $\V z \in \Rm$ and $\M H \in \R^{M \times N}$, where $M<N$. Let $\V z_{0,\lambda}=\M H \M a^*, \forall \M a^* \in \alpha_{\lambda},$ be the measurement of the solution set $\alpha_{\lambda}$ of the LASSO formulation
 \begin{align}
\M a^*=\arg \min_{\M a \in \R^{N}} \left(\|\V z -\M H \M a\|^2_2+ \lambda \|\M a\|_1 \right).\label{Lasso}
\end{align} 
Then, the solution $\M a^*_{\rm SLP}$ (obtained using the simplex algorithm) of the linear program corresponding to the problem
\begin{align}\label{eqn:LP2}
\M a^*_{\rm SLP}=\arg \min \|\M a\|_1 \quad \text{subject~to}\quad  \M H \M a=\V z_{0,\lambda}
\end{align}
 is an extreme point of $\alpha_{\lambda} $. Moreover, $\|\M a^*_{\rm SLP}\|_0 \leq M$.
\end{theorem}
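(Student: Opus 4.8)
The plan is to show that the optimal set of the linear program \eqref{eqn:LP2} coincides \emph{exactly} with the LASSO solution set $\alpha_{\lambda}$, and then to read off the two claims from the structural properties of the simplex method.

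First I would argue that the two problems share the same solution set. By Lemma \ref{lemma:KKT}, every $\M a^* \in \alpha_{\lambda}$ satisfies $\M H \M a^* = \V z_{0,\lambda}$, so the data-fidelity term $\|\V z - \M H \M a\|_2^2$ is constant on $\alpha_{\lambda}$. Since the full LASSO objective is also constant on $\alpha_{\lambda}$ (it equals the optimal value), the term $\lambda\|\M a\|_1$ must be constant there as well, so all elements of $\alpha_{\lambda}$ share a common value $c^* = \|\M a^*\|_1$. A short comparison argument then identifies $c^* = \min\{\|\M a\|_1 : \M H \M a = \V z_{0,\lambda}\}$: any feasible $\M a$ with smaller $\ell_1$ norm would strictly lower the LASSO objective, contradicting optimality, while any feasible $\M a$ attaining $c^*$ achieves the LASSO optimum and hence lies in $\alpha_{\lambda}$. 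Therefore $\alpha_{\lambda}$ is precisely the set of optimal solutions of \eqref{eqn:LP2}.

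Next I would establish the sparsity bound. Rewriting \eqref{eqn:LP2} in standard form by splitting $\M a = \M a_+ - \M a_-$ with $\M a_+, \M a_- \ge \M 0$ yields a linear program with $M$ equality constraints $[\M H, -\M H](\M a_+;\M a_-) = \V z_{0,\lambda}$ and objective $\M 1^T(\M a_+ + \M a_-)$. The simplex algorithm terminates at a basic feasible solution, i.e.\ a vertex, whose positive entries correspond to at most $M$ linearly independent columns. Because the columns attached to $a_{+,n}$ and $a_{-,n}$ are $\pm\M h_n$ and hence linearly dependent, no index can have both parts basic; consequently the support of $\M a^*_{\rm SLP} = \M a_+ - \M a_-$ has size at most $M$, giving $\|\M a^*_{\rm SLP}\|_0 \le M$.

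Finally, and this is the delicate part, I would show that the vertex returned by simplex maps to an extreme point of $\alpha_{\lambda}$ in the native space $\R^N$, not merely in the lifted coordinates. The sign-consistency part of Lemma \ref{lemma:KKT} is the crucial ingredient: since no two elements of $\alpha_{\lambda}$ disagree in sign on any coordinate, there is a well-defined sign pattern, and the canonical lift $\M a \mapsto (\M a_+, \M a_-)$ restricted to $\alpha_{\lambda}$ coincides with a fixed linear map and is injective. Suppose $\M a^*_{\rm SLP} = t\M b^{(1)} + (1-t)\M b^{(2)}$ with distinct $\M b^{(1)}, \M b^{(2)} \in \alpha_{\lambda}$ and $t \in (0,1)$; lifting and using linearity of the lift on the sign-consistent set $\alpha_{\lambda}$ would express the simplex vertex as a proper convex combination of two distinct optimal lifted points, contradicting its being a vertex of the optimal face. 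Hence $\M a^*_{\rm SLP}$ is an extreme point of $\alpha_{\lambda}$. The main obstacle is exactly this transfer of extremality through the lifting, which fails for a generic linear map but is rescued here by the sign-consistency guaranteed by Lemma \ref{lemma:KKT}.
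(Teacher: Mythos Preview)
Your proposal is correct and follows essentially the same route as the paper: both arguments hinge on Lemma~\ref{lemma:KKT}'s sign-consistency to make the lift from $\alpha_{\lambda}$ to the standard-form LP variables an affine bijection, and then transfer the vertex returned by simplex back to an extreme point of $\alpha_{\lambda}$. The paper packages this slightly differently---it introduces intermediate coordinates $(\M a,\M u)$ with $u_m=|a_m|$ (Proposition~\ref{prop:concat}) before passing to $(\M s_1,\M s_2)=(\M u+\M a,\M u-\M a)$, and it obtains the sparsity bound $\|\M a^*_{\rm SLP}\|_0\le M$ by invoking an external result (Proposition~\ref{prop:LassoSol}) rather than reading it off the basic-feasible-solution structure as you do---but the underlying mechanism is identical.
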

Theorem \ref{prop:Simplex} helps us to find an extreme point of the solution set $\alpha_{\lambda}$ of a given LASSO problem in the case when its solution is non-unique. 
To that end, we first use FISTA to solve the LASSO problem until it converges to a solution $\M a_{\text{F}}$. 
By setting $\V z_{0,\lambda}=\M H \M a_{\rm F}$, Lemma \ref{lemma:KKT} then implies that $\M H \M a=\V z_{0,\lambda}, \forall \M a \in \alpha_{\lambda}$. 
We then run the simplex algorithm to find
\begin{align*}
\M a_{\rm SLP}= \arg \min \|\M a\|_1 \quad \text{subject~to}\quad  \M H \M a=\M H \M a_{\text{F}},
\end{align*}
which yields an extreme point of $\alpha_{\lambda}$ by Theorem \ref{prop:Simplex}. \\
An example where the LASSO problem has a non-unique solution is shown in Figure \ref{1MS}.b. 
In this case, FISTA converges to a non-sparse  solution with $\|\M a_{\rm{F}}\|>M$, shown as solid stems. This implies that it is not an extreme point of the solution set.
The simplex algorithm is then deployed to minimize the $\ell_1$ norm such that the measurement $\V z_0=\M H \M a_{\rm{F}}$ is preserved.
The final solution shown as dashed stems is an extreme point with the desirable level of sparsity. The continuous-domain relation of this example is discussed later. \\
The solution of the continuous-domain formulation is a convex set whose extreme points are composed of at most $M$ shifted Green's functions. To find the position of these Green's functions, we discretize the continuum into a fine grid and then run the proposed two-step algorithm. If the discretization is fine enough, then the continuous-domain function that corresponds to the extreme point of the LASSO formulation is a good proxy for the actual extreme point of the convex-set solution of the original continuous-domain problem. This makes the extreme-point solutions of the LASSO a natural choice among the solution set.
\begin{figure}[t]
\begin{minipage}[b]{\linewidth}
  \centering
  \centerline{ \includegraphics[width=\linewidth,height=4.5cm ]{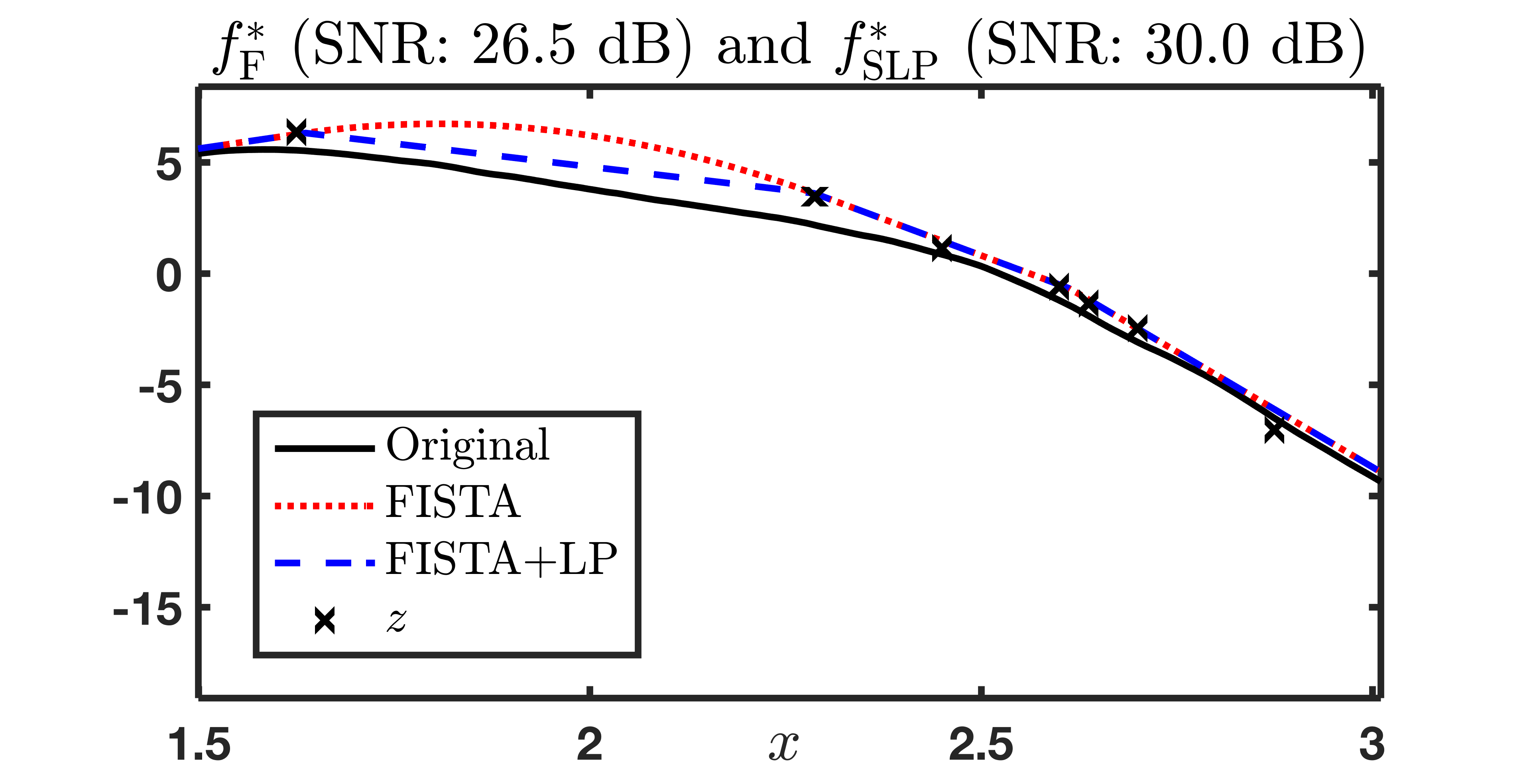}}
 \centerline{(a) }
\end{minipage}
\begin{minipage}[b]{\linewidth}
\vspace{3mm}
  \centerline{ $\M a^*_{\rm F}$ and $\M a^*_{\rm SLP}$}
  \centerline{\includegraphics[width=\linewidth, height=4.1cm, trim={0 0 0 5mm },clip]{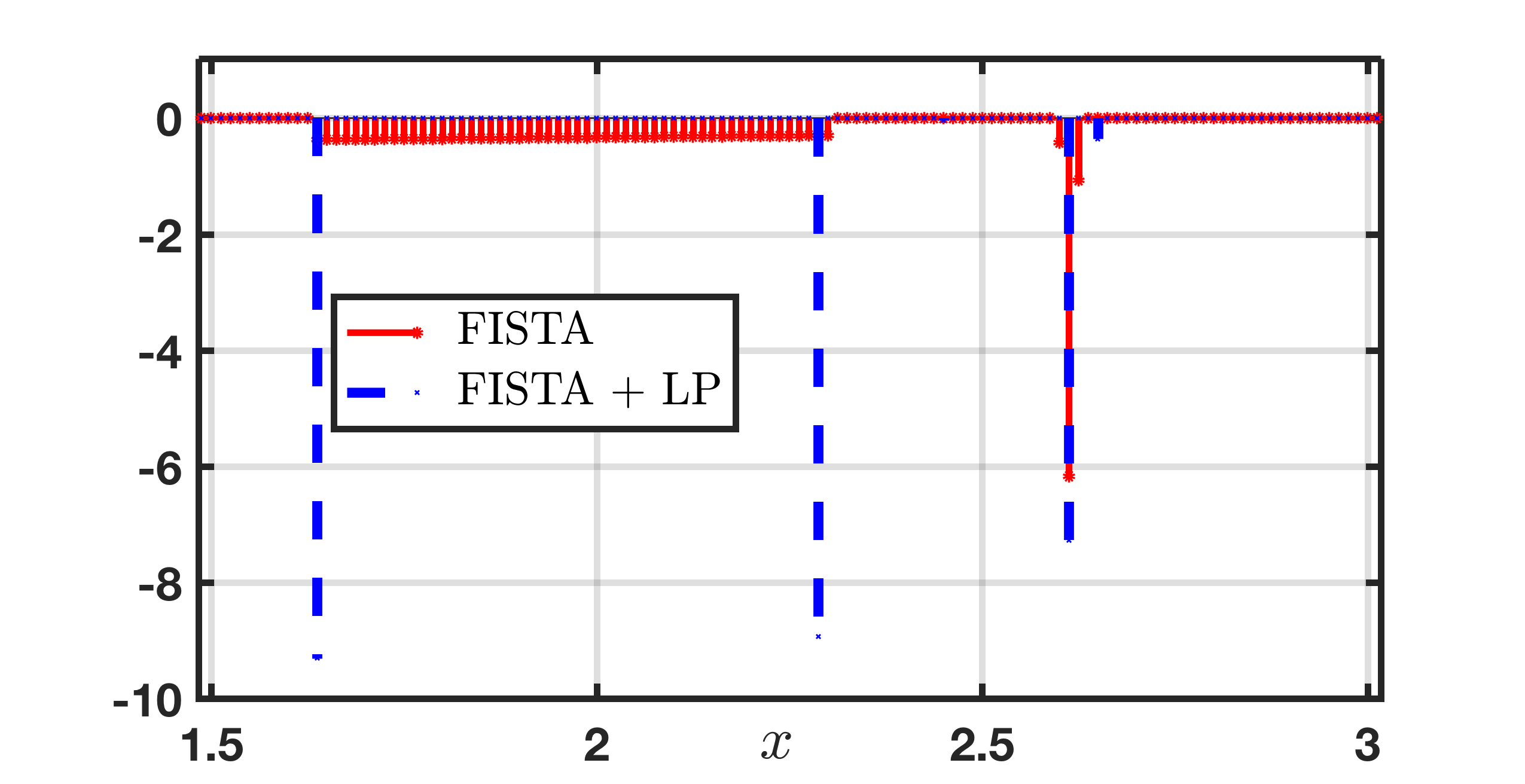}}
\centerline{(b)}
\end{minipage}
\begin{minipage}[b]{\linewidth}
\vspace{3mm}
  \centering
  \centerline{\includegraphics[width=\linewidth, height=4.5cm ]{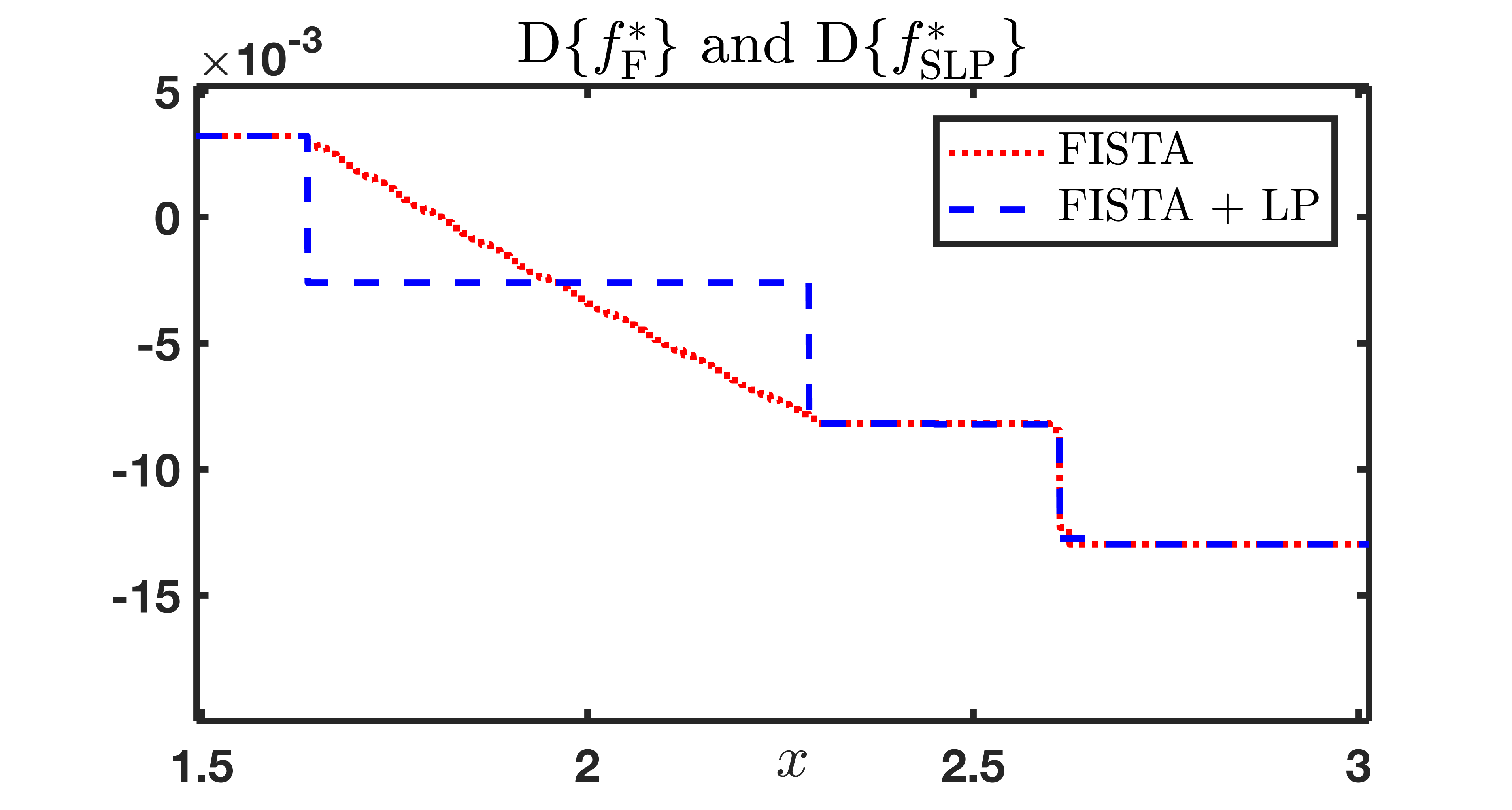}}
  \centerline{(c) }
\end{minipage}
\caption{Illustration of inability of FISTA to deliver a sparse solution : (a) comparison of solutions, $f^*_{\rm F}$ vs. $f^*_{\rm SLP}$ for continuous-domain gTV problem, (b) signal innovations with sparsity index 64 ($>M$) and 21 ($<M$), respectively, and (c) derivative of the two solutions. The two signal innovations in (b) are solutions of the same Lasso problem, but only $\M a_{\rm SLP}$ is an extreme point of the solution set. The original signal is a second-order process ($\Lop =\Op D^2$) and the measurements are $M=30$ nonuniform noisy samples (SNR = 40 dB). The parameters are $\lambda=0.182$, $N=400$, and grid step $\Delta=\frac{1}{80}$.  \label{1MS}\label{2MS} }
\end{figure}
%
For the case when there is a unique solution but the convergence is too  slow owing to the high value of the Lipschitz constant of the gradient of the quadratic loss, the simplex algorithm is used after the FISTA iterations are stopped using an appropriate convergence criterion. For FISTA, the convergence behavior is ruled by the number of iterations $t$ as
\begin{equation}
F(\M a_t)-F(\M a^*) \leq \frac{C}{(t+1)^2},
\end{equation}
where $F$ is the LASSO functional and 
\begin{align}
 C=2\|\M a_0- \M a^*\|_2^2 \max{\text{Eig}\left(\M H^T\M H\right)}
\end{align} (see \cite{beck2009}). This implies that an $\epsilon$ neighborhood of the minima of the functional is obtained in at most $t=\sqrt{C/\epsilon}$ iterations. However, there is no direct relation between the functional value and the sparsity index of the iterative solution. Using the simplex algorithm as the next step guarantees the upper bound $M$ on the sparsity index of the solution. Also, $F(\M a_{\text{SLP}})\leq F(\M a_{\text{F}})$. This implies that an $\epsilon$-based convergence criterion, in addition to the sparsity-index-based criterion like $\M a_{\text{F}} \leq M$, can be used to stop FISTA. Then, the simplex scheme is deployed to find an extreme point of the solution set with a reduced sparsity index.

\section{Illustrations}
We discuss the results obtained for the cases when the measurements are random samples either of the signal itself or of its continuous-domain Fourier transform. The operators of interest are $\Lop=\Op D$ and $\Lop=\Op D^2$. The test signal $f$ is solution of the stochastic differential equation $\Lop f=w$ \cite{unser2005generalized} for the two cases when $w$ is
\begin{itemize}
\item  \textbf{Impulsive Noise}. Here, the innovation $w$ is a sum of Dirac impulses whose locations follow a compound-Poisson distribution and whose amplitudes follow a Gaussian distribution. The corresponding process $s$ has then the particularity of being piecewise smooth \cite{Unser2011stochastic}.  
This case is matched to the regularization operator $\|\Lop f \|_{\Spc M}$ and is covered by Theorem \ref{theo:L1_representer} which states that the minima $f^*_{\text{gTV}}$ for this regularization case is such that 
\begin{align} 
w^*_{\text{gTV}}=\Lop f^*_{\text{gTV}} =\sum_{k=1}^K a_{k} \delta (\cdot - x_k),
\end{align}
which is a form compatible with a realization of an impulsive white noise.
\item \textbf{Gaussian White Noise}. This case is matched to the regularization operator $\|\Lop f \|_{L_2}$. Unlike the impulsive noise, $w^*_{L_2}=\Lop f^*_{L_2}$ is not localized to finite points and therefore is a better model for the realization of a Gaussian white noise. 
\end{itemize}
In all experiments, we also constrain the test signals to be compactly supported. This can be achieved by putting linear constraints on the innovations of the signal. 
In Sections \ref{sec:rs} and \ref{sec:fs}, we confirm experimentally that matched regularization recovers the test signals better than non-matched regularization. 
While reconstructing the Tikhonov and gTV solutions when the measurements are noisy, the parameter $\lambda$ in  \eqref{eqn:L2matrix} and \eqref{lassoNull} is tuned using a grid search to give the best recovered SNR.

\subsection{Random Sampling}\label{sec:rs}
In this experiment, the measurement functionals are Dirac impulses with the random locations $\{x_m\}_{m=1}^M$. The regularization operator is $\Lop=\Op D^2$. It corresponds to $\rho_{\Op D^2}(x)=-\frac{1}{2}|x|$ and $\varphi_{\Op D^2}(x)=\left(\rho_{\Lop^*\Lop}*h_m\right)(x)=|x-x_m|^3/12$. The null space is $\Spc N_{\Op D^2}={\rm span}\{1,x\}$ for this operator.
This means that the gTV-regularized solution is piecewise linear and that the $L_2$-regularized solution is piecewise cubic. We compare in Figures \ref{2SS}.a and \ref{2SS}.b the recovery from  noiseless samples of a second-order process, referred to as ground truth (GT). It is composed of sparse (impulsive Poisson) and non-sparse (Gaussian) innovations, respectively \cite{Unser2014book}.  The sparsity index---the number of impulses or non-zero elements---for the original sparse signal is 9. 
The solution for the gTV case is recovered with $\Delta=$ 0.05 and $N= 200$. The sparsity index of the gTV solution for the sparse and Gaussian cases are 9 and 16, respectively. As expected, the recovery of the gTV-regularized reconstruction is better than that of the $L_2$-regularized solution when the signal is sparse. For the Gaussian case, the situation is reversed.
\begin{figure*}
\begin{minipage}[b]{0.48\linewidth}
  \centering
  \centerline{ \includegraphics[width=\linewidth,height=5cm ]{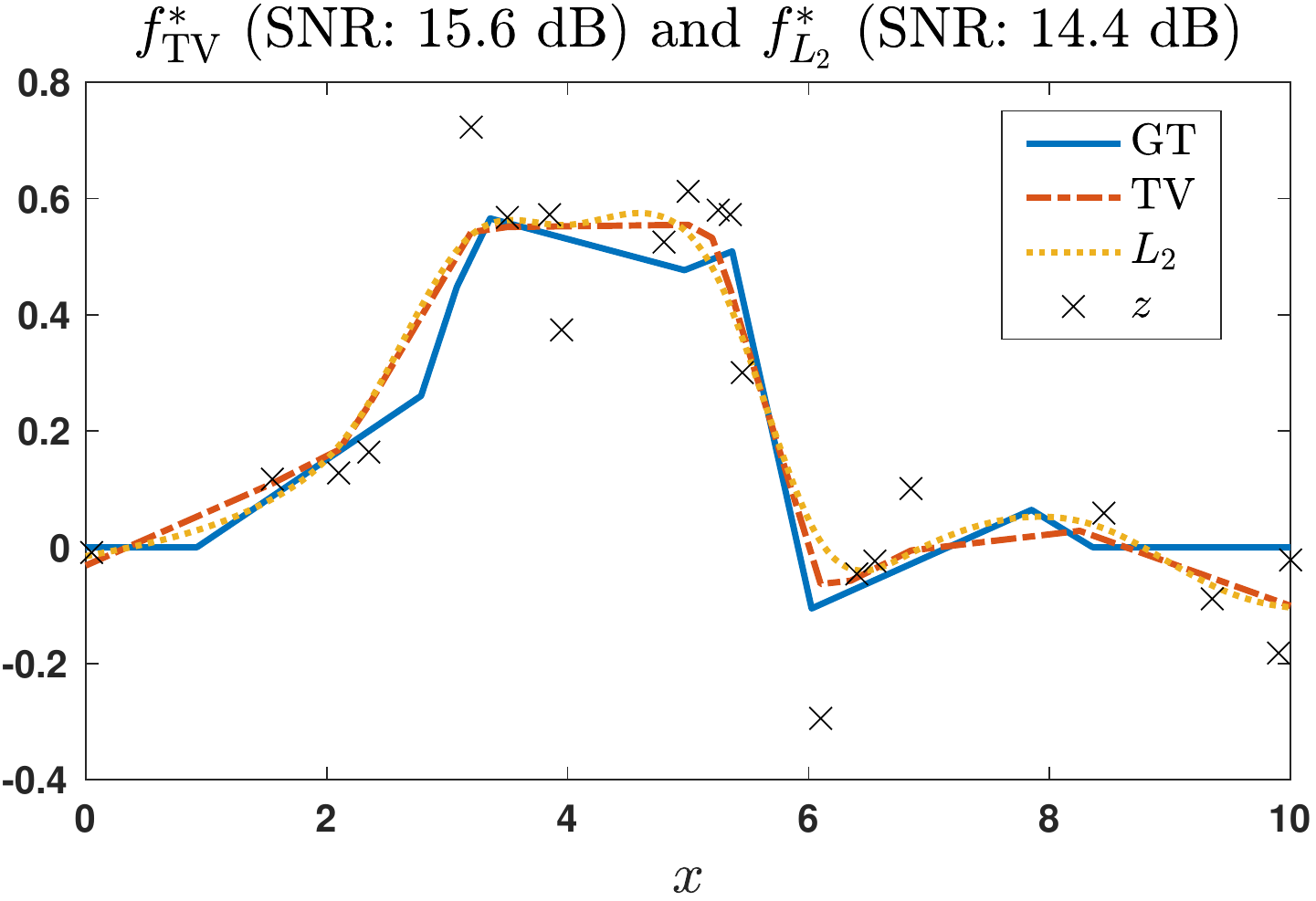}}
  \centerline{(a) Sparse Signal}
\end{minipage}
\begin{minipage}[b]{0.48\linewidth}
  \centering
  \centerline{\includegraphics[width=\linewidth,height=5cm ]{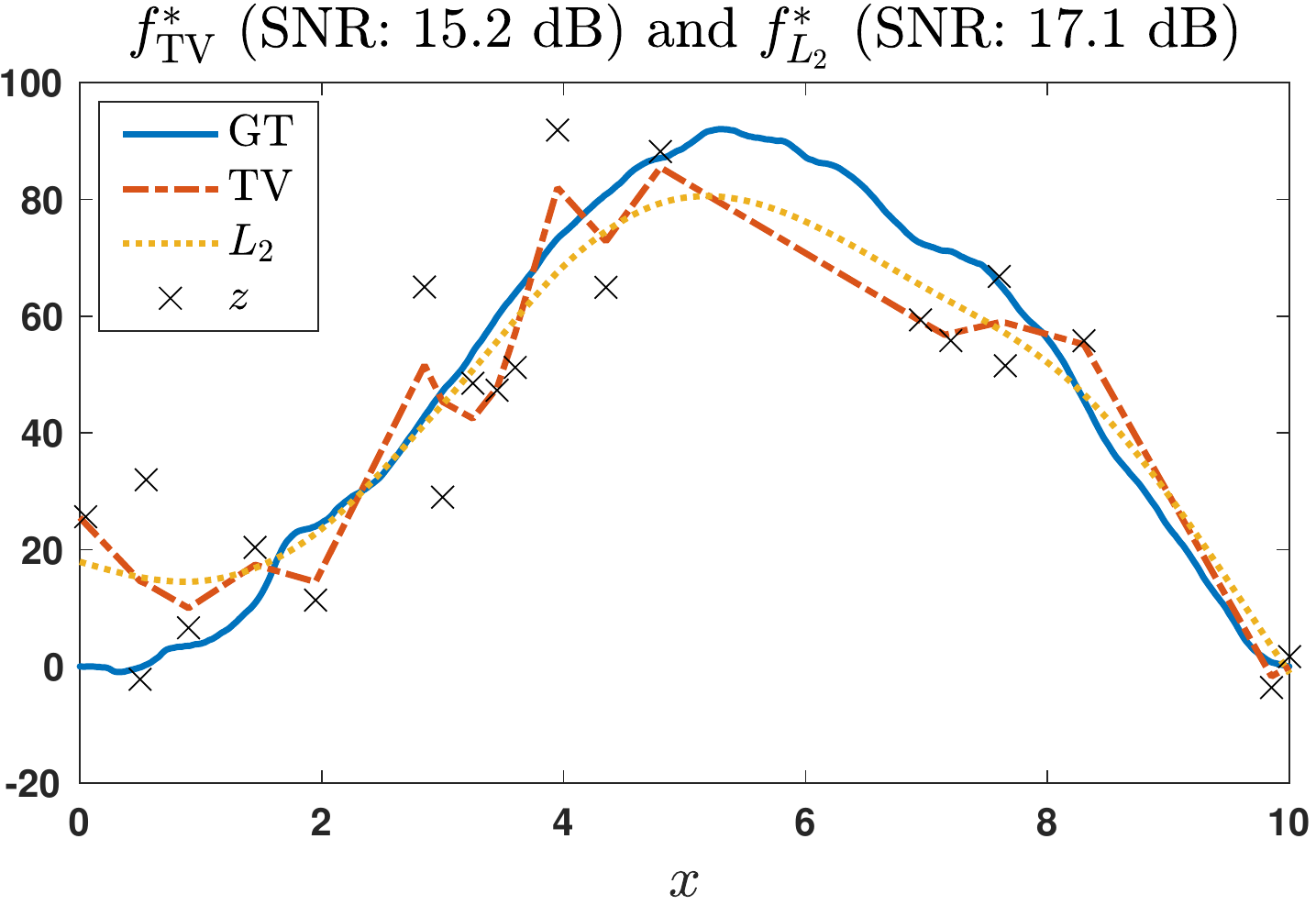}}
  \centerline{(b) Gaussian Signal}
\end{minipage}
\caption{Recovery of sparse (a) and Gaussian (b) second-order processes (GT) using $\Lop=\Op D^2$ from their nonuniform samples corrupted with 40 dB measurement noise.}\label{2SS}	
\end{figure*}

\subsection{Multiple Solutions}
We discuss the case when the gTV solution is non-unique. We show in Figure \ref{2MS}.a examples of solutions of the gTV-regularized random-sampling problem obtained using FISTA alone $(f_{\text{F}})$ and FISTA + simplex (linear programming, $f_{\text {SLP}}$). In this case, $M=30$, $\Lop=\Op D^2$, and $\lambda=0.182$. The continuous-domain functions $f_{\text F}$ and $f_{\text {SLP}}$ have basis functions whose coefficients are the (non-unique) solutions of a given LASSO problem, as shown in Figure \ref{1MS}.b. 
The $\ell_1$ norms of the corresponding coefficients are the same. Also, it holds that
\begin{align}
\|\Op D^2f_{\text F}\|_{\Spc M}&=\|\Op D^2 f_{\text {SLP}}\|_{\Spc M}
=\|\Op D f_{\text F}\|_{\text{TV}}=\|\Op D f_{\text {SLP}}\|_{\text{TV},}
\end{align} 
which implies that the TV norm of the slope of $f_{\rm F}$ and $f_{\rm {SLP}}$ are the same. This is evident from Figure \ref{2MS}.c.
The arc-length of the two curves are the same. 
The signal $f_{\text {SLP}}$ is piecewise linear ($21< M$), carries a piecewise-constant slope, and is by definition, a non-uniform spline of degree 1. By contrast, $f_{\rm F}$ has many more knots and even sections whose slope appears to be piecewise-linear.

Theorem \ref{theo:L1_representer} asserts that the extreme points of the solution set of the gTV regularization need to have fewer than $M$ knots. 
Remember that $f_{\text {SLP}}$ is obtained by combining FISTA and simplex; this ensures that the basis coefficients of $f_{\text {SLP}}$ are the extreme points of the solution set of the corresponding LASSO problem (Theorem \ref{prop:Simplex}) and guarantees that the number of knots is smaller than $M$. 

This example shows an intuitive relationship between the continuous-domain and the discrete-domain formulations of inverse problems with gTV and $\ell_1$ regularization, respectively. The nature of the continuous-domain solution set and its extreme points resonates with its corresponding discretized version. In both cases, the solution set is convex and the extreme points are sparse.

\begin{figure*}	[h]  
\begin{minipage}[b]{0.48\linewidth}
 		\centerline{ \includegraphics[width=0.95\linewidth,height=5cm]{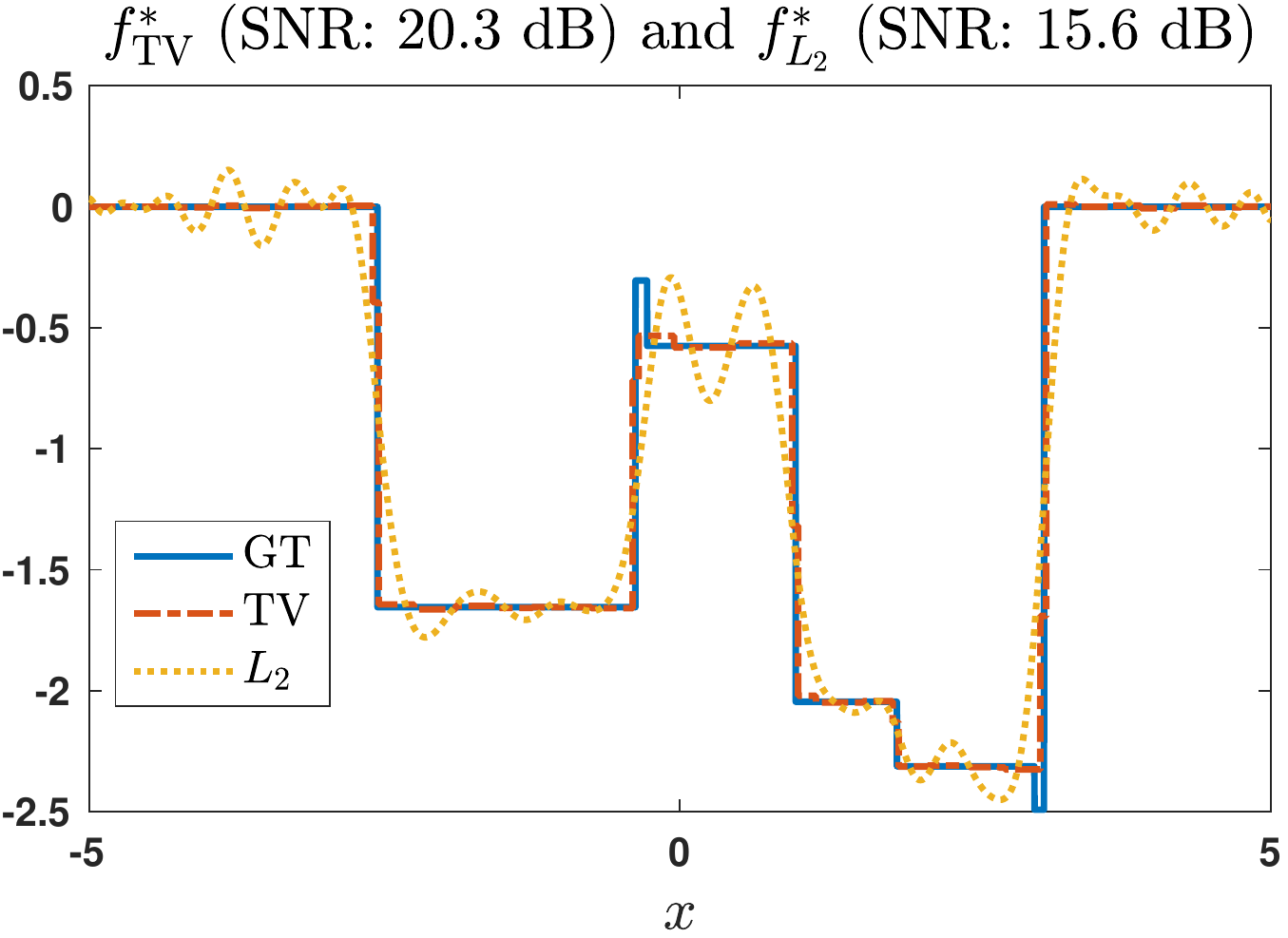}}
 		 \centerline{(a) Sparse Signal} 
\end{minipage}
\begin{minipage}[b]{0.48\linewidth}
 		\centerline{  \includegraphics[width=0.95\linewidth,height=5cm]{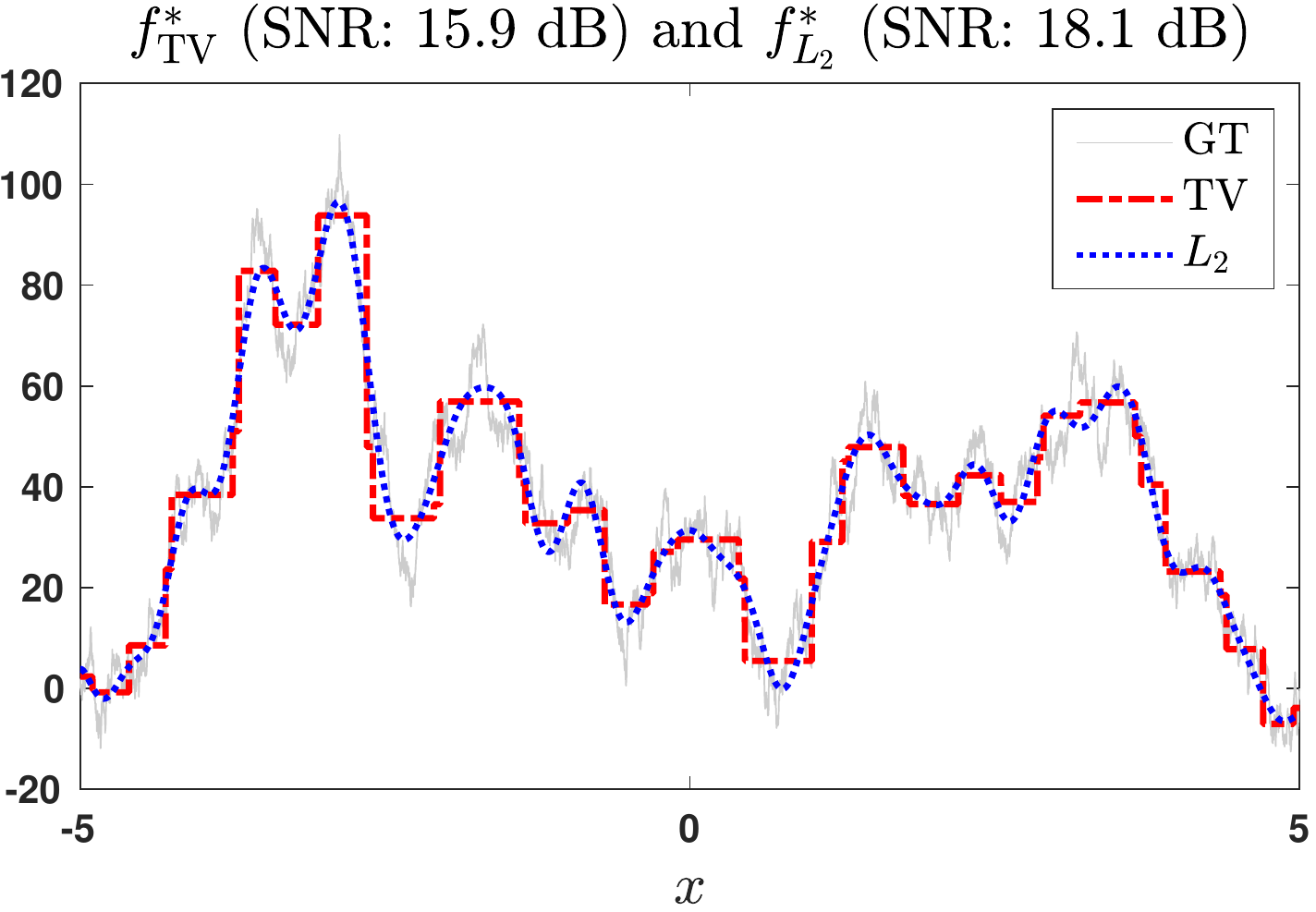}}
 		    \centerline{(b) Gaussian Signal} 
\end{minipage}
     
\begin{minipage}[b]{0.48\linewidth}
\vspace{5mm}
 		\centerline{\includegraphics[width=0.95\linewidth,height=5cm]{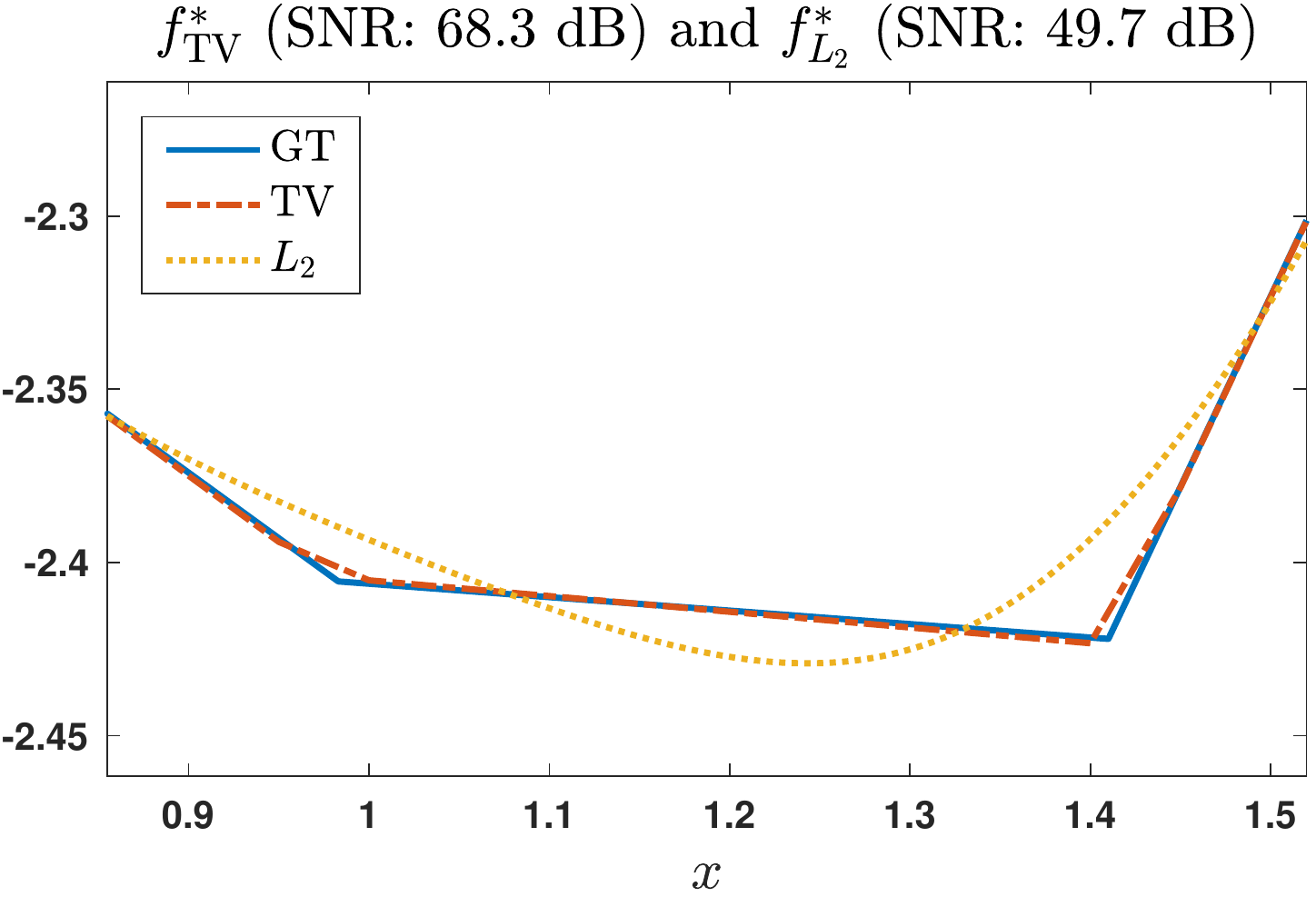}}
 		    \centerline{(c) Sparse Signal}    
\end{minipage}
\begin{minipage}[b]{0.48\linewidth}

 		\centerline{\includegraphics[width=0.95\linewidth,height=5cm]{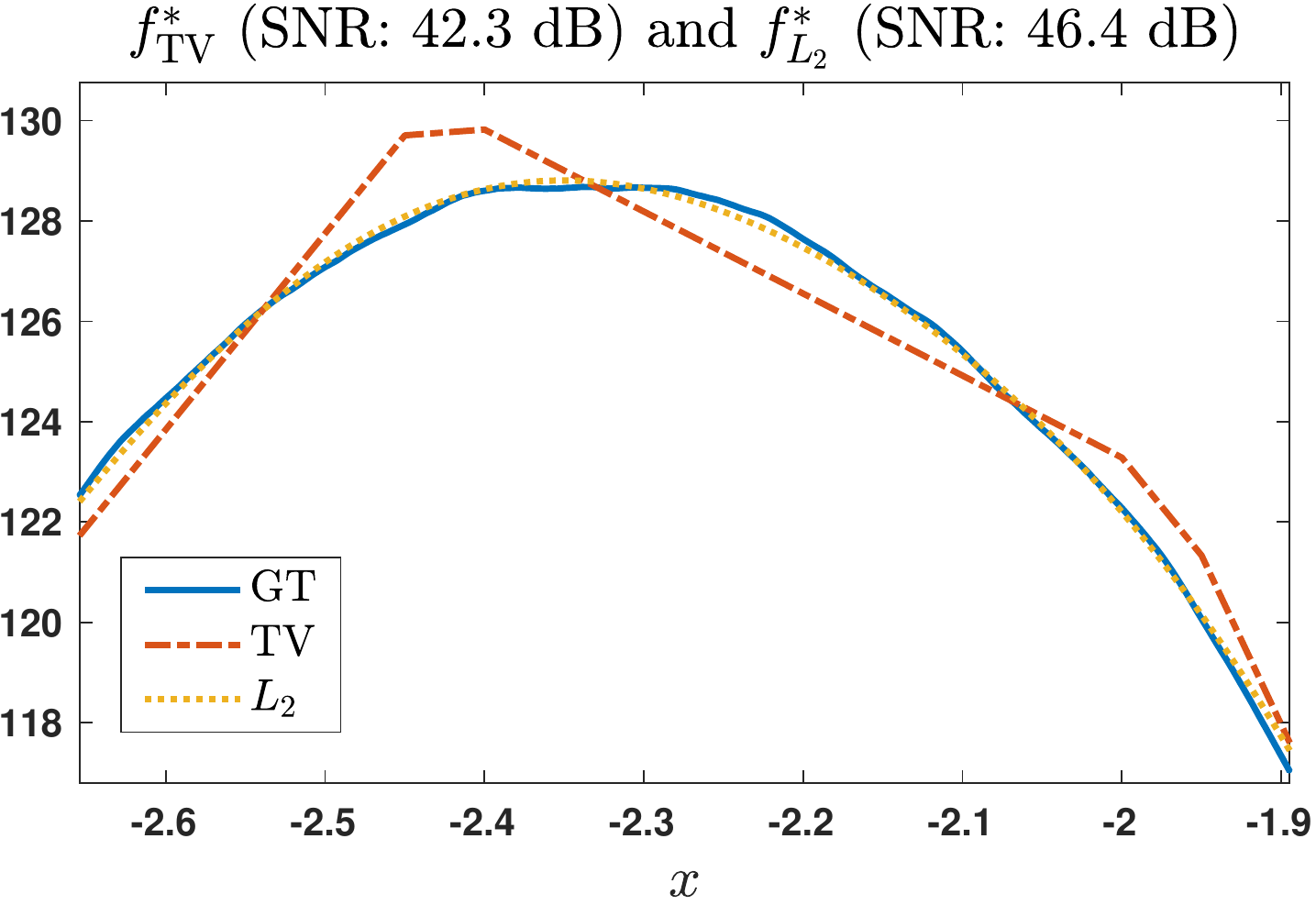}}
 		    \centerline{(d) Gaussian Signal}     
\end{minipage}
\caption{Recovery of first-order (first row) and second-order (second row) processes from their random noiseless Fourier samples. In all the cases, $M=41$ and $N=200$. In the interest of clarity, (c) and (d) contain the zoomed versions of the actual signals. \label{1FS}} 
\end{figure*}


\subsection{Random Fourier Sampling}\label{sec:fs}
Let now the measurement functions be $h_{m}(x)={\rm rect}\left(\frac{x}{T}\right)\mathrm{e}^{-{\mathrm {j}} \omega_{m}x}$, where $T$ is the window size. The samples are thus random samples of the continuous-domain Fourier transform of a signal restricted to a window. For the regularization operator $\Lop=\Op D$, the Green's function is $\rho_{\Op D}(x)=\mathds{1}_+(x)$ and the basis is $\varphi_{\Op D,m}(x)=\left(\frac{1}{2}|\cdot|*h_m\right)(x)$. 

Figure \ref{1FS}.a and \ref{1FS}.b correspond to a first-order process with sparse and Gaussian innovations, respectively. The grid step $\Delta= 0.05$, $M=41$, and $N=200$. The sparsity index of the gTV solution for the sparse and Gaussian cases is 36 and 39, respectively. For the original sparse signal (GT), it is 7. The oscillations of the solution in the $L_2$-regularized case are induced by the sinusoidal form of the the measurement functionals. This also makes the $L_2$ solution intrinsically smoother than its gTV counterpart. Also, the quality of the recovery depends on the frequency band used to sample.

In Figures \ref{1FS}.c and \ref{1FS}.d, we show the zoomed version of the recovered second-order process with sparse and Gaussian innovations, respectively. The grid step is $\Delta= 0.05$, $M=41$ and $N=200$. 
The operator $\Lop=\Op D^2$ is used for the regularization. 
This corresponds to $\rho_{\Op D^2}(x)=x_+$ and $\varphi_{\Op D^2,m}(x)=\left(\frac{1}{12}|\cdot|^3*h_m\right)(x)$. 
The sparsity index of the gTV solution in the sparse and Gaussian cases is 10 and 36, respectively. For the original sparse signal (GT), it is 10. Once again, the recovery by gTV is better than by $L_2$ when the signal is sparse. In the Gaussian case, the $L_2$ solution is better.
 
The effect of sparsity on the recovery of signals from their noiseless and noisy (40 dB SNR) Fourier samples are shown in Table 1. The sample frequencies are kept the same for all the cases. Here, $M=41$, $N=200$, $T=10$, and the grid step  $\Delta=0.05$. We observe that reconstruction performances for random processes based on impulsive noise are  comparable to that of Gaussian processes when the number of impulses increases. This is reminiscent of the fact that generalized-Poisson processes with Gaussian jumps are converging in law to corresponding Gaussian processes \cite{Fageot2016gaussian}.

 \begin{figure*}[htbp]
 \centerline{ \includegraphics[width=0.6\linewidth,height=8.75 cm,trim={0 7mm 2cm 7mm},clip]{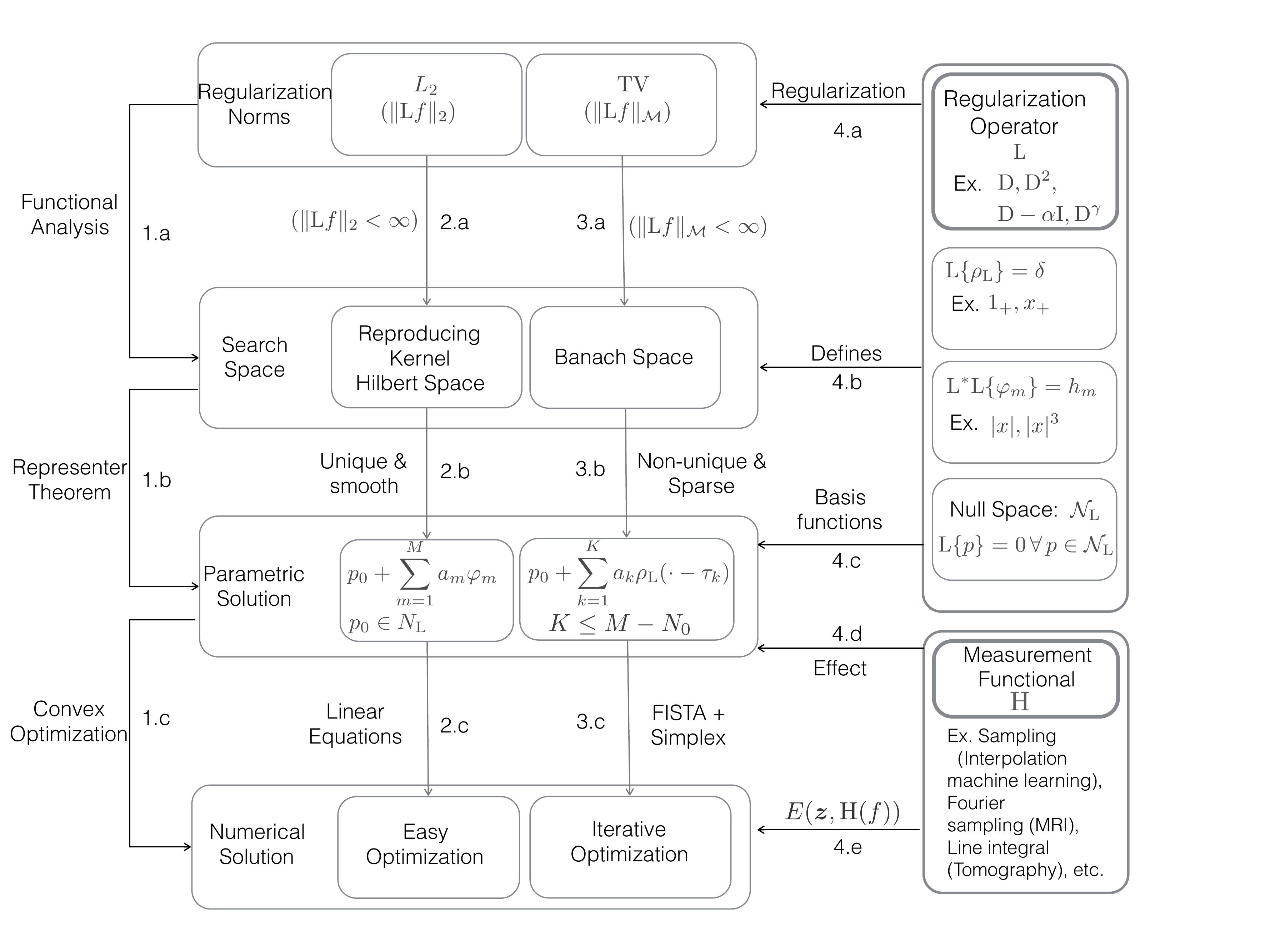}}
\caption{  
 Summary of the whole scheme. The regularization operator with a given norm \{4.a\} defines the search space for the solution\{1.a, 4.b\}. Representer theorems then give the parametric representation of the solution \{1.b\}. The numerical solution is then recovered by optimizing over the parameters to minimize $J_R(\V z|f)$ \{1.c\}. \label{Flowchart}}
\end{figure*}

\section{Conclusion}
We have shown that the formulation of continuous-domain linear inverse problems with Tikhonov- and total-variation-based regularizations leads to spline solutions. The nature of these splines is dictated by the Green's function of the regularization operator $\Lop$ and $(\Lop^*\Lop) $ for Tikhonov and total variation, respectively.
The former is better to reconstruct smooth signals; the latter is an attractive choice to reconstruct signals with sparse innovations. Representer theorems for the two cases come handy in the numerical reconstruction of the solution. They allow us to reformulate the infinite-dimensional optimization as a finite-dimensional parameter search. The formulations and the results of this paper are summarized in Figure \ref{Flowchart}.

\begin{table*}[]

\hspace{5mm}
\begin{tabular}{l|l|ll|ll}
\hline
\hline
No. of &  & \multicolumn{2}{l|}{~~~$\Op D$} & \multicolumn{2}{l}{~~~$\Op D^2$} \\ \cline{3-6} 
impulses  &Sparsity & TV   & $L_2$ &TV   & $L_2$  \\ \hline
\multicolumn{1}{r|}{10} & Strong & \textbf{19.60} & 15.7 & \textbf{52.08} & 41.54  \\
\multicolumn{1}{r|}{100} & Medium & \textbf{16.58}  &{16.10}  & \textbf{41.91}  & 41.26  \\
\multicolumn{1}{r|}{2000} & Low &  14.45&\textbf{16.14}  & 39.68  &\textbf{41.40}  \\
\multicolumn{1}{r|}{-} & Gaussian & 14.30 &\textbf{16.32}  &40.05  & \textbf{41.23} \\ \hline
\end{tabular}
\hspace{5mm}
\begin{tabular}{l|l|ll|ll}
\hline
\hline
No. of &  & \multicolumn{2}{l|}{~~~$\Op D$} & \multicolumn{2}{l}{~~~$\Op D^2$} \\ \cline{3-6} 
impulses &Sparsity & TV   & $L_2$ &TV   & $L_2$  \\ \hline
\multicolumn{1}{r|}{10} & Strong & \textbf{17.06}  & 11.52 & \textbf{25.55}&24.60  \\
\multicolumn{1}{r|}{100} & Medium & \textbf{13.24} &  10.94& \textbf{24.44} &24.24  \\
\multicolumn{1}{r|}{2000} & Low & 10.61 &\textbf{11.13}  &25.80  & \textbf{26.19}  \\
\multicolumn{1}{r|}{-} & Gaussian & 10.40 &\textbf{11.10}  &24.95  & \textbf{25.48}  \\ \hline
\end{tabular}
\caption{Comparison of TV and $L_2$  recovery from their (left table) noiseless  and (right table) noisy (with 40 dB SNR) random Fourier samples. The results have been averaged over 40 realizations. \label{Noiseless}}

\end{table*}

\begin{appendices}

\section{Proof of Theorem \ref{solution set} }\label{appx:solutionset}
Let $J^*$ be the minimum value attained by the solutions. Let $f_1$ and $f_2$ be two solutions. Let $E_1$, $E_2$ be their corresponding $E$ functional value and  let $R_{1}, R_2$ be their corresponding regularization functional value. Since the cost function is convex, any convex combination $f_{12}=\beta f_1 +(1-\beta)f_2$ is also a solution for $\beta \in [0,1]$ with functional value $J^*$. Let us assume that $\Op H \{f_1\}\neq \Op H\{ f_2\}$. Since $E$ is strongly convex and $R$ is convex, we get that
\begin{IEEEeqnarray}{rcl}\nonumber
 J(f) &=&E(\V z, \Op H\{\beta f_1+(1-\beta)f_2\})+\lambda R( \beta f_{1} + (1-\beta) f_2) \nonumber\\ 
&<& \underbrace{\beta E_1 +(1-\beta) E_2 +\beta R_{1}+(1-\beta) R_2}_{ J^*}. \nonumber\\ \nonumber 
\end{IEEEeqnarray}
This is a contradiction. Therefore, $\Op H\{f_1\}= \Op H\{f_2\}=\Op H\{f_{12}\}$.

\section{Abstract Representer Theorem}

The result presented in this section is preparatory to Theorem \ref{theo:L2_representer}. It is classical for Hilbert spaces. We give its proof for the sake of completeness. 

\begin{theorem}\label{theo:ART}
 Let $\Spc X$ be a Hilbert space equipped with the inner product $\langle \cdot ,\cdot  \rangle_{\Spc X}$ and a set of linear functionals $h_1,\ldots,h_M \in \Spc X'$. Let  $\Spc C  \in \R^M$ be a feasible convex compact set, meaning that there exists at least a function $f \in \mathcal{X}$ such that $\Op H \{f\} \in \Spc C$. Then, the minimizer 
\begin{equation}
f^*=\arg \min_{f\in \Spc X} \|f\|^2_{\Spc X} \text{~s.t.~} \Op H\{ f\}\in \Spc C
\end{equation}
exists, is unique, and can be written as
\begin{equation}
f^*=\sum_{m=1}^M a_m h^*_m
\end{equation}
for some $\{a_m\}^M_{m=1} \in \R$, where $h^*_m=\Op R h_m$ and $\Op R: \Spc X' \to \Spc X$ is the Riesz map of $\Spc X$.
\end{theorem}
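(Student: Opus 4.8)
The plan is to split the argument into two independent parts: first, existence and uniqueness of the minimizer via the Hilbert-space projection theorem; second, the parametric form via an orthogonality (Pythagoras) argument.

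For the first part, I would introduce the feasible set $\Spc K = \{f \in \Spc X : \Op H\{f\} \in \Spc C\} = \Op H^{-1}(\Spc C)$. Since each $h_m \in \Spc X'$ is a bounded linear functional, the map $\Op H : \Spc X \to \R^M$ is linear and continuous. As $\Spc C$ is convex and compact (hence closed), the preimage $\Spc K$ is convex (preimage of a convex set under a linear map) and closed (preimage of a closed set under a continuous map), and it is nonempty by the feasibility hypothesis. Minimizing $\|f\|^2_{\Spc X}$ over $\Spc K$ is equivalent to locating the point of $\Spc K$ nearest the origin; by the projection theorem for nonempty closed convex subsets of a Hilbert space, this nearest point exists and is unique. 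This delivers the existence and uniqueness of $f^*$ at once.

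For the parametric form, I would pass to the Riesz representers. By the Riesz representation theorem, each functional $h_m$ acts as $f \mapsto \langle h_m^*, f\rangle_{\Spc X}$ with $h_m^* = \Op R h_m \in \Spc X$, so $\Op H\{f\}$ depends on $f$ only through the inner products $\langle h_m^*, f\rangle_{\Spc X}$. Setting $V = \mathrm{span}\{h_1^*,\ldots,h_M^*\}$, a finite-dimensional and hence closed subspace, I would use the orthogonal decomposition $f = f_V + f_\perp$ with $f_V \in V$ and $f_\perp \in V^\perp$. Because each $h_m^* \in V$, we have $\langle h_m^*, f\rangle_{\Spc X} = \langle h_m^*, f_V\rangle_{\Spc X}$, so $\Op H\{f\} = \Op H\{f_V\}$ and feasibility of $f$ is equivalent to feasibility of $f_V$. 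Meanwhile $\|f\|^2_{\Spc X} = \|f_V\|^2_{\Spc X} + \|f_\perp\|^2_{\Spc X} \geq \|f_V\|^2_{\Spc X}$, with equality only if $f_\perp = 0$. Thus replacing any feasible $f$ by $f_V$ preserves feasibility without increasing the objective, which forces the unique minimizer to satisfy $f^*_\perp = 0$, i.e. $f^* \in V$; expanding in the spanning family yields $f^* = \sum_{m=1}^M a_m h_m^*$ for some $a_m \in \R$.

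I expect the only delicate point to be the verification that $\Spc K$ is genuinely closed and convex so that the projection theorem applies; this is precisely where the compactness (hence closedness) of $\Spc C$ and the boundedness of the $h_m$ enter. The orthogonality step is then routine. I would also remark that $\Spc K$ may well be unbounded when $\Op H$ has a nontrivial kernel, but this causes no difficulty, since the projection theorem requires only closedness and convexity while $\|\cdot\|^2_{\Spc X}$ is coercive and strictly convex.
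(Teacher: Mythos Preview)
Your proof is correct and follows essentially the same approach as the paper: existence and uniqueness via the Hilbert projection theorem applied to the closed convex preimage $\Op H^{-1}(\Spc C)$, followed by the orthogonal decomposition with respect to $\mathrm{span}\{h_1^*,\ldots,h_M^*\}$ and a Pythagoras argument to force $f_\perp^* = 0$. Your additional remarks on why closedness (rather than boundedness) of the feasible set is what matters are accurate and do not depart from the paper's line of reasoning.
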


\begin{proof}
Let $\Spc C_{\Spc X} = \Op H^{-1} ( \Spc C) = \{ f \in \Spc X , \ \Op H\{ f\} \in \Spc C\}  \in \Spc X$, assumed to be nonempty. Since $\Op H$ is linear and bounded and  since $\Spc C$ is convex and compact, its preimage $\Spc C_{\Spc X}$ is also convex and closed.  By Hilbert's projection theorem \cite{rudin1987real}, the solution $f^*$ exists and is unique as the projection of the null function onto $\Spc C_{\Spc X}$. Let the measurement of this unique point $f^*$ be $\Op H \{f^*\}=\V z_0$.\\
The Riesz representation theorem states that $\langle h_m,f \rangle=\langle h^*_m,f \rangle_{\Spc X}$ for every $f \in \Spc X$, where $h_m^* \in \Spc X$ is the unique Riesz conjugate of the functional $h_m$. 
We then uniquely decompose $f^*$ as $f^*=f^{\bot}+\sum_{m=1}^{M} a_m h_m^* $ , where  $ f^{\bot}$ is orthogonal to the span of the $h^*_m$ with respect to the inner product on $\Spc X$.
The orthogonality implies that
\begin{equation}
 \|f^*\|^2_{\Spc X}=\left\| f^{\bot}\right\|^2_{\Spc X}+\left\|\sum_{m=1}^{M} a_m h^*_m \right\|^2_{\Spc X}.
\end{equation} 
This means that the minimum norm is reached when $f^{\bot} = 0$, implying that the form of the solution is $f^*=\sum_{m=1}^{M} a_m h_m^*$.  \\ 
\end{proof}

\section{Proof of Theorem \ref{theo:L2_representer}}    \label{apppx:L2}                                                                            
The proof of Theorem \ref{theo:L2_representer} has two steps. We first show that there exists a unique solution. Then, we use Theorem \ref{theo:ART} to deduce the form of the solution. 

\textit{Existence and Unicity of the Solution.}
As is classical in convex optimization, it suffices to show that the functional $J_2(\V z| \cdot)$ is coercive and strictly convex. 
We start with the coercivity. The measurement operator $\Op H$ is continuous and linear from $\Spc X_2$ to $\R^M$; hence, there exists a constant $C$ such that 
\begin{equation}\label{eq:1forcoercive}
\lVert \Op H \{f\} \rVert_2 \leq C \lVert f \rVert_{\mathcal{X}_2}
\end{equation}
for every $f \in \Spc X_2$. Likewise,
the condition $\Op H \{p\} = \Op H \{q\} \Rightarrow p = q$ for $p,q \in \mathcal{N}_{\Op L}$ implies the existence of $B>0$ such that \cite[Proposition 8]{Unser2016}
\begin{equation}\label{eq:2forcoercive}
\lVert \Op H \{p\} \rVert_2 \geq B \lVert p \rVert_{\mathcal{N}_{\Op L}}
\end{equation}
for every $p \in \Spc N_{\Op L}$.
Any $f \in \Spc X_2$ can be uniquely decomposed as $f = \Op L^{-1} w + p$ with $w \in  L_2 (\R)$ and $p \in \Spc N_{\Op L}$. Then, we remark that 
$\lVert f - p \rVert_{\Spc X_2} = \lVert w \rVert_{L_2}$.

Putting \eqref{eq:1forcoercive} and \eqref{eq:2forcoercive} together, we deduce with the triangular inequality that
\begin{align}
	\lVert \Op H \{f\} \lVert_2 &= \lVert \Op H \{p\} + \Op H \{ f-p \} \rVert_2  \\
	&\geq \lVert \Op H \{p\} \rVert_2 - \lVert \Op H \{ f-p\} \rVert_2   \nonumber \\
			& \geq B  \lVert p \rVert_{\Spc N_{\Op L}} - C \lVert f - p \lVert_{\Spc X_2} =  B \lVert p \rVert_{\Spc N_{\Op L}} - C \lVert w \lVert_{L_2}.  \label{eq:3forcoercive}
\end{align}
Assume that $\lVert f \rVert_{\mathcal{X}_2} \rightarrow \infty$. It means that $ \lVert p \rVert_{\Spc N_{\Op L}} $ or $ \lVert w \lVert_{L_2}$ are unbounded. 
If $ \lVert p \rVert_{\Spc N_{\Op L}} $ is significantly larger than $ \lVert w \lVert_{L_2}$, then $\lVert \Op H \{f\} \rVert \rightarrow \infty$ according to \eqref{eq:3forcoercive}; hence, $J_2(\V z|f) \geq E(\V z, \Op H \{f\}) \rightarrow \infty$ using the coercivity of $E$. 
Otherwise, it means that $ \lVert w \lVert_{L_2}$ is dominating and $J_2(\V z|f) \geq \lambda \lVert w \lVert_{L_2} \rightarrow \infty$. In both cases, $J_2(\V z|f) \rightarrow \infty$ and $J_2(\V z|\cdot)$ is coercive.\\
For the strict convexity, we first remark that $J_2(\V z| \cdot)$ is convex. For $\beta \in (0,1)$, $f_1, f_2 \in \Spc X_2$, we denote $f_{12} = \beta f_1 + (1-\beta) f_2$. 
Then, the equality case $J_2(\V z| f_{12}) = \beta J_2(\V z| f_{1 }) + (1-\beta) J_2(\V z| f_{2})$ implies that
$E(\V z| f_{12}) = \beta E(\V z| f_{1 }) + (1-\beta) E(\V z| f_{2})$ and $\lVert \Op L f_{12}  \rVert_{L_2} = \beta \lVert \Op L  f_{1} \rVert_{L_2} + (1-\beta) \lVert \Op L  f_{2} \rVert_{L_2}$, since the two parts of the functional are themselves convex. The strict convexity of $E( \boldsymbol{z} |  \cdot )$ and the norm $\lVert \cdot \rVert_2$ then implies that 
\begin{align}
 \Op L f_1  &=  \Op L f_2  \text{ and } \Op H \{f_1\}  =  \Op H \{f_2\}
\end{align}
and, therefore, $(f_1 - f_2) \in \mathcal{N}_{\Op L} \cap \mathcal{N}_{\Op H}$. Hence, $f_1 = f_2$ and the strict convexity is demonstrated. 
The functional  $J_2(\V z|\cdot)$ is coercive and strictly convex and, therefore, admits a unique minimizer $f^* \in \Spc X$.  

\textit{Form of the Minimizer.} 
Let $\V z_0 = \Op H \{f^*\}$. 
One decomposes again $\Spc X_2$ as the direct sum $\Spc X_2 = \Spc H + \Spc N_{\Op L}$, where $$\Spc H = \{ f \in \Spc X_2, \ \langle f, p \rangle = 0, \, \forall p \in \Spc N_{\Op L} \}$$ is the Hilbert space with norm $\lVert \Op L \cdot \rVert_{L_2}$.
In particular, we have that $f^* = h^* + p^*$ with $h^* \in \Spc H$ and $p^* \in \Spc N_{\Op L}$. 
Consider the optimization problem
\begin{equation} \label{eq:intermediateminimization}
\arg \min_{g\in \Spc H} \| \Op L g \|^2_{L_2} \text{~s.t.~} \Op H \{g\} = (\V z_0 - \Op H \{p^*\}),
\end{equation}
which is well-posed because the measurements $h_m$ are in $\Spc X_2' \subset \Spc H'$. 
According to Theorem \ref{theo:ART}, this problem admits a unique minimizer $g^* = \sum_{m=1}^M a_m h_m^*$, where $h_m^* \in \Spc H$. By definition, the function $h^*$ also satisfies  $\Op H \{h^*\} = (\V z_0 - \Op H \{p^*\})$. Moreover, $ \| \Op L h^* \|^2_{L_2} \leq  \| \Op L  g^*\|^2_{L_2}$; otherwise, the function $\tilde{f} = g^* + p^* \in \Spc X_2$ would satisfy $J_2(\V z | \tilde{f} ) < J_2(\V z | f^*)$, which is impossible. This means that $\tilde{f}$ is minimizing \eqref{eq:intermediateminimization}. By unicity, one has that $h^* = g^* = \sum_{m=1}^M a_m h_m^*$. 

So far, we have shown that $f^* =p^* + \sum_{m=1}^M a_m h_m^* $.
The Riesz map $ \Op R : \Spc H' \rightarrow \Spc H$ is given for $h \in \Spc H'$ by 
\begin{equation}
	\Op R \{ h \} (x) = \int_{\R} \rho_{\Op L^* \Op L}( x - y ) h(y) \mathrm{d} y = ( \rho_{\Op L^* \Op L} * h)(x),
\end{equation}
where $\rho_{\Op L^* \Op L}$ is the Green's function of the operator $(\Op L^* \Op L)$ (see Definition \ref{def:splineadmiss}). This is easily seen from the form of the norm $\lVert \Op L \cdot \rVert_{L_2}$ over $\Spc H$ and the characterization of the Riesz map as $\langle \Op R f , g \rangle_{\Spc H} = \langle f , g \rangle$. This implies that $h^*_m = \rho_{\Op L^* \Op L} * h_m = \varphi_m$ and $f^*$ has the form \eqref{eqn:spline_L2}. 

We conclude by remarking that the condition $\Op R  h  \in \Spc H$ for every $h \in \Spc H'$ implies in particular that $\sum_{m} a_m h_m \in \Spc H$, or, equivalently, that $\sum_m a_m \langle h_m , p \rangle = 0$ for every $p \in \Spc N_{\Op L}$, which proves \eqref{eqn:ortho}.

\section{Proof of Theorem \ref{theo:L1_representer}} \label{app:proofL1theo}

As for the $L_2$ case, the proof has two steps: We first show that the set of minimizers is nonempty. We then connect the optimization problem to the one studied in \cite[Theorem 2]{Unser2016} to deduce the form of the extreme points. 
The functional to minimize is $J_1(\V z |f) = E(\V z, \Op H\{f\})+ \lambda \| \Lop f\|_{\Spc M}$, defined over $f$ in the Banach space $\Spc X_1$. \\

\textit{Existence of Solutions.} We first show that $\mathcal{V} = \left\{\arg \min_{f\in \Spc X_1}  J_1(\V z |f)\right\}$ is nonempty. We use the results of Theorem \ref{theo:banachopti}, which can be found in \cite[Section 3.6]{Ito2016}.

\begin{theorem} \label{theo:banachopti}
Let $F : \Spc X \rightarrow \R^+$ be a functional on the Banach space $\Spc X$ with norm $\lVert \cdot \rVert$. 
\begin{enumerate}
    \item A convex and lower semi-continuous functional on $\Spc X$ is weakly lower semi-continuous.
    
    \item The norm $\lVert \cdot \rVert$ is weakly lower semi-continuous in $\Spc X$.
        
    \item A weakly lower semi-continuous and coercive functional on $\Spc X$ reaches its infimum. 
\end{enumerate}
\end{theorem}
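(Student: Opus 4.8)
The plan is to prove the three assertions in the order listed, since each subsequent claim leans on the previous one. The unifying mechanism throughout is the compatibility between convexity and the weak topology, supplemented by coercivity for the existence statement in part (iii). Parts (i) and (ii) are purely topological facts about semicontinuity, while part (iii) is the direct method of the calculus of variations.

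For assertion (i), I would argue through sublevel sets. Recall that $F$ is lower semi-continuous if and only if every sublevel set $S_\alpha = \{x \in \Spc X : F(x) \leq \alpha\}$ is strongly closed, and that $F$ is convex if and only if each $S_\alpha$ is convex. By Hahn--Banach separation (Mazur's theorem), a strongly closed convex subset of a Banach space is weakly closed; hence every $S_\alpha$ is weakly closed, which is exactly the statement that $F$ is weakly lower semi-continuous. Assertion (ii) then follows at once, since the norm $\lVert \cdot \rVert$ is convex and strongly continuous, hence lower semi-continuous, so (i) applies. Alternatively, one can argue directly: if $x_n \rightharpoonup x$ and $x^* \in \Spc X'$ is a norming functional with $\lVert x^* \rVert = 1$ and $\langle x^*, x \rangle = \lVert x \rVert$, then $\lVert x \rVert = \lim_n \langle x^*, x_n \rangle \leq \liminf_n \lVert x_n \rVert$.

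For assertion (iii), I would take a minimizing sequence $(x_n)$ with $F(x_n) \to \inf_{\Spc X} F$. Coercivity forces $(x_n)$ to be bounded; extracting a weakly convergent subsequence $x_{n_k} \rightharpoonup x^\star$ and applying the weak lower semi-continuity from (i) gives $F(x^\star) \leq \liminf_k F(x_{n_k}) = \inf_{\Spc X} F$, so the infimum is attained at $x^\star$.

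The delicate step, and the main obstacle, is the extraction of a weakly convergent subsequence in (iii): this is the only place where the topology enters nontrivially. In a reflexive space it is automatic, via Banach--Alaoglu together with Eberlein--\v{S}mulian. However, the search space $\Spc X_1$ relevant to Theorem \ref{theo:L1_representer} is built on $\Spc M$, the dual of $\Spc C_0(\R)$, and is \emph{not} reflexive, so the plain weak topology does not make bounded sets compact. The remedy, and the reason the paper hypothesizes that $\Op H$ is weak*-continuous, is to replace \emph{weak} by \emph{weak*} throughout: Banach--Alaoglu then yields weak*-compactness of closed bounded balls, and on such balls the weak* topology is metrizable because the predual $\Spc C_0(\R)$ is separable, so the sequential arguments above remain valid. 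Transporting assertions (i)--(iii) into this weak* setting is precisely what renders them applicable to the functional $J_1(\V z \mid \cdot)$.
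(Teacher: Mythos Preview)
The paper does not prove Theorem \ref{theo:banachopti}; it is quoted as a known result from \cite[Section 3.6]{Ito2016} and invoked as a black box inside the proof of Theorem \ref{theo:L1_representer}. There is therefore no ``paper's own proof'' to compare against.

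Your arguments are the standard ones and are correct. For (i), the sublevel-set characterization together with Mazur's theorem is exactly the textbook route; (ii) is then immediate, and your alternative via a norming functional is also valid. For (iii), the direct-method outline is correct, and you are right to single out the extraction step as the only nontrivial point: obtaining a weakly convergent subsequence from a bounded sequence genuinely requires reflexivity (or an equivalent compactness hypothesis), which the theorem as stated does not assume. Your final paragraph accurately explains how the paper sidesteps this in the application to $\Spc X_1$: one works with the weak* topology instead, Banach--Alaoglu supplies compactness of bounded sets, separability of the predual makes that topology metrizable on balls, and the weak*-continuity assumption on $\Op H$ guarantees the needed lower semi-continuity of the data term. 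In short, your proposal is correct and is in fact more explicit about the topological subtlety than the paper, which simply defers to the cited reference.
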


According to Theorem \ref{theo:banachopti}, the existence of solutions is guaranteed if $J_1(\V z |\cdot)$ is weakly lower semi-continuous and coercive. The coercivity is deduced exactly in the same way we did for Theorem \ref{theo:L2_representer}. The continuity is obtained as follows: 
The function $E( \V z | \cdot)$ is convex and lower semi-continuous in $\R^M$ and, therefore, weakly lower semi-continuous by Theorem \ref{theo:banachopti}. 
Moreover, $\Op H$ is weak*-continuous by assumption. Hence, it is continuous for the norm topology. (Indeed, the weak*-topology being weaker than the norm topology on $\Spc X_1$, it is less restrictive to be continuous for the norm topology, that has more open sets, than for the weak*-topology.) It implies that $E(\V z | \Op H \{ \cdot \})$ is weakly lower semi-continuous by composition. 
Moreover, the norm $\lVert \cdot \rVert_{\Spc X_1}$ is lower semi-continuous on $\Spc X_1$ by Theorem \ref{theo:banachopti}. Finally, $J_1(\V z |\cdot)$ is lower semi-continuous as the sum of two lower semi-continuous functionals.

\textit{Form of the Extreme Points.}
Theorem \ref{solution set} implies that all minimizers of $J_1(\V z |\cdot)$ have the same measurement $\Op H\{f^*\} = \V z_0$. The set of minimizers is thus equal to
\begin{equation}
    \mathcal{V} = \left\{\arg \min_{f \in \Spc X_1} \| \Lop f \|_{\Spc M}, \ \text{s.t. } \Op H \{f\} = \V z_0\right\}.
\end{equation}
Since $\Spc V$ is nonempty, the condition $\Op H \{f\} = \V z_0$ is feasible. 
We can therefore apply Theorem 2 of \cite{Unser2016} to deduce that $\mathcal{V}$ is convex and weak*-compact, together with the general form \eqref{eq:L1formsolutions} of the extreme-point solutions. 

\section{Proof of Theorem \ref{prop:Simplex}}\label{proof:simplex}
We first state two propositions that are needed for the proof. Their proofs are given in the supplementary material.
\begin{proposition}[\text{Adapted from \cite[Theorem 5]{Unser2016b}}]\label{prop:LassoSol} Let $\V z \in \Rm$ and $\M H \in \R^{M \times N}$, where $M<N$. Then, the solution set $\alpha_{\lambda}$ of 
\begin{align}
\M a^*=\arg \min_{\M a \in \R^{N}} \left( \|\V z -\M H \M a\|^2_2+ \lambda \|\M a\|_1 \right)
\end{align}
is a compact convex set and $\|\M a\|_0 \leq M,\,\forall\M a \in \alpha_{E,\lambda}$, where $\alpha_{E,\lambda}$ is the set of the extreme points of $\alpha_{\lambda}$.
\end{proposition}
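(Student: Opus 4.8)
The plan is to verify the three assertions in turn. Write $F(\M a) = \|\V z - \M H \M a\|_2^2 + \lambda \|\M a\|_1$ for the LASSO functional with $\lambda > 0$, which is convex and continuous. Convexity of $\alpha_{\lambda}$ is immediate, since the set of minimizers of a convex functional is convex. For compactness I would first observe that $F$ is coercive: the inequality $\lambda \|\M a\|_1 \leq F(\M a)$ confines every sublevel set to an $\ell_1$-ball, hence to a bounded set. As $\alpha_{\lambda} = \{\M a \in \R^N : F(\M a) = F^*\}$ is a level set of the continuous functional $F$, it is closed; being bounded as well, it is compact.

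The core of the statement is the sparsity of the extreme points. The key preliminary fact is that $\M H \M a$ takes a common value $\V z_0$ over all of $\alpha_{\lambda}$: this follows from the strong convexity of $\M u \mapsto \|\V z - \M u\|_2^2$, exactly as in Lemma \ref{lemma:KKT} (and Theorem \ref{solution set}). Consequently the data term is constant on $\alpha_{\lambda}$, so the $\ell_1$-norm is also constant there, equal to some $t = \min\{\|\M a\|_1 : \M H \M a = \V z_0\}$.

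I would then argue by contradiction. Let $\hat{\M a}$ be an extreme point with support $S = \mathrm{supp}(\hat{\M a})$, and suppose $\|\hat{\M a}\|_0 = |S| > M$. The submatrix $\M H_S$ of the columns indexed by $S$ is $M \times |S|$ with more columns than rows, so it has a nontrivial kernel; hence there is $\V v \neq \M 0$, supported on $S$, with $\M H \V v = \M 0$. For $\epsilon > 0$ small enough that the signs of $\hat{\M a}$ on $S$ are unchanged, both $\hat{\M a} \pm \epsilon \V v$ satisfy $\M H (\hat{\M a} \pm \epsilon \V v) = \V z_0$ and have $\ell_1$-norm $\|\hat{\M a}\|_1 \pm \epsilon \langle \boldsymbol{\sigma}, \V v \rangle$, where $\boldsymbol{\sigma} = \mathrm{sign}(\hat{\M a})$. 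Thus $F(\hat{\M a} \pm \epsilon \V v) = F^* \pm \lambda \epsilon \langle \boldsymbol{\sigma}, \V v \rangle$, and minimality of $F^*$ applied to both signs forces $\langle \boldsymbol{\sigma}, \V v \rangle = 0$. Hence both perturbed points lie in $\alpha_{\lambda}$, and the identity $\hat{\M a} = \tfrac12 (\hat{\M a} + \epsilon \V v) + \tfrac12 (\hat{\M a} - \epsilon \V v)$ exhibits $\hat{\M a}$ as the midpoint of two distinct minimizers, contradicting extremality. Therefore $\|\hat{\M a}\|_0 \leq M$.

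I expect the delicate step to be the last one: verifying that the sign pattern of $\hat{\M a}$ is preserved under the perturbation (so that the $\ell_1$-norm is affine in $\epsilon$ along $\V v$) and that minimality genuinely forces the first-order term $\langle \boldsymbol{\sigma}, \V v \rangle$ to vanish, placing both $\hat{\M a} \pm \epsilon \V v$ in $\alpha_{\lambda}$. As an alternative route, one can recast $\min \|\M a\|_1$ subject to $\M H \M a = \V z_0$ as a standard-form linear program through the split $\M a = \M a_+ - \M a_-$ with $\M a_+, \M a_- \geq \M 0$, and invoke the classical fact that basic feasible solutions of a polyhedron defined by $M$ equality constraints have at most $M$ nonzero coordinates; this is the line followed in \cite{Unser2016b}.
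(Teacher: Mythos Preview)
Your argument is correct. The convexity and compactness are handled cleanly, and the perturbation argument for the sparsity bound is sound: since $\V v$ is supported on $S$ and $\epsilon$ is small enough to preserve signs on $S$, the $\ell_1$-norm is exactly affine along $\V v$, minimality kills the linear term, and the midpoint representation contradicts extremality.

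The paper's proof follows a different, less self-contained route. It uses Lemma~\ref{lemma:KKT} to observe that every $\M a^* \in \alpha_\lambda$ satisfies $\M H \M a^* = \V z_{0,\lambda}$, hence that $\alpha_\lambda$ coincides with the solution set of the constrained problem $\min \|\M a\|_1$ subject to $\M H \M a = \V z_{0,\lambda}$; it then simply invokes \cite[Theorem~5]{Unser2016b} for that problem to obtain convexity, compactness, and the bound $\|\M a\|_0 \le M$ on the extreme points. You share the common-measurement step but replace the external citation with a direct proof: an explicit coercivity/continuity argument for compactness and the kernel-perturbation argument for sparsity. Your approach has the advantage of being fully self-contained (and, incidentally, your ``alternative route'' via the $\M a_+ - \M a_-$ split and basic feasible solutions is essentially what \cite{Unser2016b} does); the paper's approach has the virtue of brevity, since the heavy lifting has already been done elsewhere.
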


\begin{proposition}\label{prop:concat}
Let the convex compact set $\alpha_{\lambda}$ be the solution set of Problem (\ref{eqn:Lassolemma}) and let $\alpha_{E,\lambda}$ be the set of its extreme points. Let the operator $T:\alpha_{\lambda}\to \R^N $ be such that $\Op T \M a=\M u \text{ with } u_m= |a_m|, m \in [1\ldots N] $. Then, the operator is linear  and invertible over the domain $\alpha_{\lambda}$  and the range $\Op T\alpha_{\lambda}  $ is convex compact such that the image of any extreme point $\M a_{E}\in  \alpha_{E,\lambda} $ is also an extreme point of the set $\Op T\alpha_{\lambda}  $.
\end{proposition}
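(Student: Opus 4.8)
The plan is to exploit the sign-consistency property from Lemma~\ref{lemma:KKT} to show that, although the componentwise absolute value is nonlinear in general, its restriction to $\alpha_{\lambda}$ coincides with a fixed invertible diagonal linear map. From there, the remaining claims follow from elementary properties of invertible linear maps.

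First I would establish the existence of a consistent sign vector. Lemma~\ref{lemma:KKT} guarantees that any two solutions $\M a^{(1)}, \M a^{(2)} \in \alpha_{\lambda}$ satisfy ${\rm sign}(a_m^{(1)}){\rm sign}(a_m^{(2)}) \geq 0$ for every $m$. This forbids two solutions from taking strictly opposite signs in any coordinate, so for each $m$ either $a_m \geq 0$ holds for all $\M a \in \alpha_{\lambda}$, or $a_m \leq 0$ holds for all $\M a \in \alpha_{\lambda}$. I would therefore fix a vector $\V \sigma \in \{-1,+1\}^N$ by setting $\sigma_m = +1$ in the first case and $\sigma_m = -1$ in the second (resolving ties arbitrarily in coordinates where $a_m$ vanishes for all solutions). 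By construction, $|a_m| = \sigma_m a_m$ for every $\M a \in \alpha_{\lambda}$ and every $m$, so that
\begin{equation}
\Op T \M a = {\rm diag}(\V \sigma)\, \M a \qquad \text{for all } \M a \in \alpha_{\lambda}.
\end{equation}
This identifies $\Op T$ on its domain with the restriction of the linear operator ${\rm diag}(\V \sigma)$, which establishes its linearity.

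Next, the diagonal matrix ${\rm diag}(\V \sigma)$ has entries $\pm 1$ and is its own inverse; hence it is a linear bijection of $\R^N$ and $\Op T$ is invertible on $\alpha_{\lambda}$ with inverse $\M u \mapsto {\rm diag}(\V \sigma)\, \M u$. Since $\alpha_{\lambda}$ is convex and compact by Proposition~\ref{prop:LassoSol}, and since a linear map sends convex sets to convex sets while a continuous map sends compact sets to compact sets, the image $\Op T \alpha_{\lambda} = {\rm diag}(\V \sigma)\, \alpha_{\lambda}$ is convex and compact. Finally, an invertible linear map is an affine isomorphism and therefore carries extreme points to extreme points: if $\M a_E \in \alpha_{E,\lambda}$ were such that $\Op T \M a_E = \frac{1}{2}(\M u_1 + \M u_2)$ for distinct $\M u_1, \M u_2 \in \Op T \alpha_{\lambda}$, then applying ${\rm diag}(\V \sigma)$ (which is its own inverse) would write $\M a_E$ as a nontrivial convex combination of the two distinct points ${\rm diag}(\V \sigma)\M u_1, {\rm diag}(\V \sigma)\M u_2 \in \alpha_{\lambda}$, contradicting the extremality of $\M a_E$. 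Hence $\Op T \M a_E$ is an extreme point of $\Op T \alpha_{\lambda}$.

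The main obstacle is the linearity claim, since the absolute value is genuinely nonlinear; the crux is recognizing that Lemma~\ref{lemma:KKT} rigidifies the sign pattern across the entire solution set and thereby collapses $\Op T$ to a single diagonal map. Everything else amounts to standard bookkeeping with invertible linear maps.
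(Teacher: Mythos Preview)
Your proof is correct and follows essentially the same route as the paper: both use the sign-consistency of Lemma~\ref{lemma:KKT} to identify $\Op T$ on $\alpha_{\lambda}$ with a fixed diagonal linear map, then invoke standard facts about linear bijections to transfer convexity, compactness, and extremality. The only cosmetic difference is that the paper allows diagonal entries in $\{-1,0,1\}$ (with zeros in coordinates where every solution vanishes) and argues invertibility via $\M R\M R\M a=\M a$ on $\alpha_{\lambda}$, whereas your choice of $\sigma_m\in\{-1,+1\}$ with arbitrary tie-breaking yields a genuinely invertible matrix on all of $\R^N$, which is slightly cleaner.
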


The linear program corresponding to (\ref{eqn:LP2}) is
\begin{align}\label{eqn:LP3}
 \left(\M a^*, \M u^*\right)=\min_{\M a,\M u} ~\sum_{n=1}^N  u_n, ~\text{subject~to}~
 \M u +\M a&\geq \M 0,\nonumber \\[-4mm]
 \M u -\M a&\geq \M 0,\nonumber \\ 
 \quad \M P \M a &=\V z.
 \end{align}
 By putting $ \M u +\M a=\M s_1$ and $(\M u -\M a)=\M s_2$, the standard form of this linear program is 
\begin{align}\label{eqn:LP4}
\left(\M s^*_1, \M s^*_2\right)=\min_{\M s_1,\M s_2} \left(~\sum_{n=1}^N  s_{1n}+s_{2n}\right), ~\text{s.t.}~
 \M s_1 &\geq \M 0, \nonumber\\[-4mm]
 \M s_2 &\geq \M 0, \nonumber\\
 \quad \M P \M s_1-\M P \M s_2 &\leq\V z\nonumber\\
 \quad -\M P \M s_1+\M P \M s_2 &\leq-\V z.
\end{align}
Any solution $\M a^*$ of \eqref{eqn:LP3} is equal to $(\M s^*_{1} -\M s^*_2)$ for some solution pair (\ref{eqn:LP4}). 
We denote the concatenation of any two independent points $\M s^r_1,\M s^r_2 \in \R^N$ by the variable $\M s^r=\left(\M s^r_1,\M s^r_2 \right) \in \R^{2N}$. Then, the concatenation of the feasible pairs $\M s^f=\left(\M s^f_1, \M s^f_2\right)$ that satisfies the constraints of the linear program (\ref{eqn:LP4}) forms a polytope in $\R^{2N}$. 
Given that \eqref{eqn:LP4} is solvable, it is known that at least one of the extreme points of this polytope is also a solution. The simplex algorithm is devised such that its solution $\M s^*_{\rm SLP}=\left(\M s^*_{1,\rm SLP}, \M s^*_{2,\rm SLP}\right)$ is an extreme point of this polytope \cite{Luenberger1973}. Our remaining task is to prove that $\M a_{\rm SLP}^*=\left(\M s^*_{1,\rm SLP} -\M s^*_{2,\rm SLP}\right)$ is an extreme point of the set $\alpha_{\lambda}$, the solution set of the problem \eqref{eqn:Lassolemma}.

Proposition \ref{prop:LassoSol} claims that the solution set $\alpha_{\lambda}$ of the LASSO problem  is a convex set with extreme points $\alpha_{E,\lambda}\in \R^N$. As $\alpha_{\lambda}$ is convex and compact, the concatenated set $\zeta= \{\V w \in \R^{2N}:\V w=\left(\M a^*, \M u^*\right), \M a^* \in \alpha_{\lambda}\}$ is convex and compact by Proposition \ref{prop:concat}. The transformation $\left(\M a^*,\M u^*\right)=\left(\M s^*_1 -\M s^*_2, \M s^*_1 +\M s^*_2\right)$ is linear and invertible. This means that the solution set of (\ref{eqn:LP4}) is convex and compact, too. The simplex solution corresponds to one of the extreme points of this convex compact set. \\
Since the map $\left(\M a^*,\M u^*\right)=\left(\M s^*_1 -\M s^*_2, \M s^*_1 +\M s^*_2\right)$ is linear and invertible, it also implies that an extreme point of the solution set of (\ref{eqn:LP4}) corresponds to an extreme point of  $\zeta$. Proposition \ref{prop:concat} then claims that this extreme point of $\zeta$ corresponds to an extreme point $\M a_{\rm SLP} \in \alpha_{\lambda,E}$.

\end{appendices}

\bibliographystyle{IEEEtran.bst} 
\bibliography{l1bib}

\newpage

\section*{Supplementary Material}

\subsection{Structure of the Search Spaces} \label{app:structure}

\emph{Decomposition of $\Spc X_1$ and $\Spc X_2$.} The set $\Spc X_1$ is the search space, or native space, for the gTV case.
It is defined and studied in \cite[Section 6]{Unser2016}, from which we recap the main results. Note that the same construction is at work for $\Spc X_2$, which is then a Hilbert space.

Let $\boldsymbol{p} = (p_1, \ldots , p_{N_0})$ be a basis of the finite-dimensional null space of $\Op L$. 
If $\boldsymbol{\phi} = (\phi_1 , \ldots , \phi_{N_0})$ and $\boldsymbol{p} = (p_1, \ldots , p_{N_0})$ form a biorthonormal system such that $ \langle \phi_{n_1}, p_{n_2} \rangle = \delta[n_1-n_2]$, and if $\phi_n$ is in $\Spc X’_1$, then
$\Op P f = \sum_{n=1}^{N_0} \langle f,  \phi_n \rangle p_n$ is a well-defined projector from $\Spc X_1$ to $\Spc N_{\Op L}$. The finite-dimensional null space of $\Op L$ is a Banach (and even a Hilbert) space for the norm 
\begin{equation}
    \lVert p \rVert_{\mathcal{N}_{\Op L}} = \left( \sum_{n=1}^{N_0} \langle p , \phi_n \rangle^2 \right)^{1/2}.
\end{equation}
Moreover, $f \in \Spc X_1$ is uniquely determined by $w = \Op L f \in \Spc M (\R)$ and $p = \Op P f \in \Spc N_{\Op L}$. More precisely, there exists a right-inverse operator $\Lop^{-1}_{\boldsymbol{\phi}}$ of $\Lop$ such that \cite[Theorem 4]{Unser2016} 
\begin{align} 
f = \Op L_{\boldsymbol{\phi}}^{-1} w + p.
\end{align}
In other words, $\Spc X_1$ is isomorphic to the direct sum $\Spc M(\R) \oplus \Spc N_{\Op L}$, from which we deduce that it is a Banach space for the norm \cite[Theorem 5]{Unser2016}
\begin{equation} \label{eq:normX1}
    \lVert f \rVert_{\Spc X_1} = \lVert \Op L f \rVert_{\Spc M} + \lVert \Op P f \rVert_{\Spc N_{\Op L}} = \lVert w \rVert_{\Spc M} + \lVert p \rVert_{\Spc N_{\Op L}}.
\end{equation}
\\
\emph{Predual of $\Spc X_1$.} The space $\Spc M(\R)$ is the topological dual of the space $C_0(\R)$ of continuous and vanishing functions. The space $\Spc X_1$ inherits this property: It is the topological dual of $C_{\Op L}(\R)$, defined as the image of $C_0(\R)$ by the adjoint $\Lop^*$ of $\Lop$ according to \cite[Theorem 6]{Unser2016}. 

We can therefore define a weak*-topology on $\Spc X_1$: It is the topology for which $f_n \rightarrow 0$ if $\langle f_n , \varphi \rangle \rightarrow 0$ for every $\varphi \in C_{\Op L}(\R)$. The weak*-topology is crucial to ensure the existence of solutions of \eqref{empiricalL1}; see \cite{Unser2016} for more details.

\subsection{Proof of Proposition \ref{prop:LassoSol}}
Using Lemma \ref{lemma:KKT}, it is clear that $\alpha_{\lambda}$ is also a solution set of
\begin{align}\label{eqn:minl1interpolation}
\alpha_{\lambda}= \arg \min \|\M a\|_1 \quad \text{s.t.}\quad  \M H \M a=\V z_{0,\lambda}
\end{align}
for some $\V z_{0,\lambda}$. The solution of the problem akin to (\ref{eqn:minl1interpolation}) has been discussed in \cite{Unser2016b} and is proven to be convex and compact such that the  extreme points $\alpha_{E,\lambda}$ of the convex set $\alpha_{\lambda}$ satisfy $\|\M a\|_0 \leq M$ for any $\M a \in \alpha_{E,\lambda}$.
\subsection{Proof of Proposition \ref{prop:concat}}

\begin{proof}
Let $\M \beta $ and $\gamma$ be such that $\beta_m= \min \left(0,\min_{\M a \in \alpha_{\lambda}}\text{sign}(a_m)\right)\in \{-1,0\}$ and $\gamma_m=\max\left(0, \max_{\M a \in \alpha_{\lambda}}\text{sign}(a_m)\right)\in \{0,1\}$ for $m \in [1\ldots N]$. Lemma \ref{lemma:KKT} claims that no two solutions from the solution set have different signs for their $m$th element. This means that the following statements are true:
\begin{align}\nonumber
&\{\beta_m \text{sign}(a_m)\geq0,\gamma_m \text{sign}(a_m) \geq0, ~\forall \M a \in \alpha_{\lambda} \}\\\nonumber
&\{\beta_m\neq 0 \Rightarrow \gamma_m =0,\gamma_m\neq 0 \Rightarrow \beta_m=0\}\\
&\{\beta_m +\gamma_m=0 \Rightarrow \beta_m=0, \gamma_m=0 \Rightarrow \text{sign}(a_m)=0, ~\forall \M a \in \alpha_{\lambda} \}\\
&\{\forall\M a \in \alpha_{\lambda}, a_m\neq0   \Rightarrow \beta_m +\gamma_m= \text{sign}(a_m) \}\\ 
&\{ \forall\M a \in \alpha_{\lambda}, |a_m|= (\beta_m +\gamma_m)a_m\}.\label{eqn:modulus}
\end{align} 
Statement (\ref{eqn:modulus}) shows that, for any $\M a \in \alpha_{\lambda}$, $\Op T \M a=\M R \M a$, where $\M R \in \R^{N\times N}$ is a diagonal matrix with entries $\M R_{mm}=\beta_m +\gamma_m$. Thus, the operation of $\Op T $ is linear in the domain $\alpha_{\lambda}$. Also, $\M a=\M R \M R \M a $ for $\M a \in \alpha$ implies that the operator $\Op T$ is invertible. \\
This ensures that the image of the convex compact set $\Op T \alpha_{\lambda}$ is also convex compact and the image of any extreme point $\M a_{E}\in  \alpha_{E,\lambda} $ is also an extreme point of the set $\Op T\alpha_{\lambda}$. Similarly, it can be proved that the concatenated set $\zeta= \{\V w \in \R^{2N}:\V w=\left(\M a, \Op T \M a\right), \M a \in \alpha_{\lambda}\}$ is the image of a linear and invertible concatenation operation on $\alpha$. Thus, it is convex and compact, and the image of any extreme point through the inverse operation of the concatenation $ w_{E}\in  \zeta_{E,\lambda} $ is also an extreme point of $\alpha_{\lambda}$.
\end{proof}
\end{document}